\DeclarePairedDelimiter\ceil{\lceil}{\rceil}
\DeclarePairedDelimiter\floor{\lfloor}{\rfloor}
\newtheorem{theorem}{Theorem}
\newtheorem{corollary}[theorem]{Corollary}
\theoremstyle{plain}
\theoremstyle{remark}
\theoremstyle{definition}
\newtheorem{definition*}{Definition}
\def\blue{}
\newcommand{\dd}{\normalfont\textup{d}}
\def\eps{\varepsilon}
\def\E{\mathbb{E}}
\def\P{\mathbb{P}}
\def\R{\mathbb{R}}
\def\Z{\mathbb{Z}}
\def\A{A}
\def\single{\normalfont\textup{single}}
\def\double{\normalfont\textup{double}}
\def\perf{\normalfont\textup{perf}}
\def\ka{\normalfont k_{\textup{a}}}
\def\ke{\normalfont k_{\textup{e}}}
\def\AUC{\normalfont\textup{AUC}}
\def\XR{\normalfont\textup{XR}}
\def\IR{\normalfont\textup{IR}}
\def\C{\normalfont C_{0}}
\begin{document}

% Use the \preprint command to place your local institutional report
% number in the upper righthand corner of the title page in preprint mode.
% Multiple \preprint commands are allowed.
% Use the 'preprintnumbers' class option to override journal defaults
% to display numbers if necessary
%\preprint{}

%Title of paper
\title{Designing drug regimens that mitigate nonadherence}

% repeat the \author .. \affiliation  etc. as needed
% \email, \thanks, \homepage, \altaffiliation all apply to the current
% author. Explanatory text should go in the []'s, actual e-mail
% address or url should go in the {}'s for \email and \homepage.
% Please use the appropriate macro foreach each type of information

% \affiliation command applies to all authors since the last
% \affiliation command. The \affiliation command should follow the
% other information
% \affiliation can be followed by \email, \homepage, \thanks as well.
\author{Elijah D. Counterman\thanks{Department of Mathematics, University of Utah, Salt Lake City, UT 84112 USA.} \and Sean D. Lawley\thanks{Department of Mathematics, University of Utah, Salt Lake City, UT 84112 USA (\texttt{lawley@math.utah.edu}).}
}
\date{\today}
\maketitle

\begin{abstract}
Medication adherence is a well-known problem for pharmaceutical treatment of chronic diseases. Understanding how nonadherence affects treatment efficacy is made difficult by the ethics of clinical trials that force patients to skip {\blue doses of the medication being tested}, the unpredictable timing of missed doses by actual patients, and the many competing variables that can either mitigate or magnify the deleterious effects of nonadherence, such as pharmacokinetic absorption and elimination rates, dosing intervals, dose sizes, adherence rates, etc. In this paper, we formulate and analyze a mathematical model of the drug concentration in an imperfectly adherent patient. Our model takes the form of the standard single compartment pharmacokinetic model with first order absorption and elimination, except that the patient takes medication only at a given proportion of the prescribed dosing times. Doses are missed randomly, and we use stochastic analysis to study the resulting random drug level in the body. We then use our mathematical results to propose principles for designing drug regimens that are robust to nonadherence. In particular, we quantify the resilienc{\blue e} of extended release drugs to nonadherence, which is quite significant in some circumstances, and we show the benefit of taking a double dose following a missed dose if the drug absorption or elimination rate is slow compared to the dosing interval. We further use our results to compare some antiepileptic and antipsychotic drug regimens. 
\end{abstract}

%%%%%%%%%%%%%%%%%%%%%%%%%%%%%%%%%%%%%%%%%%%%%%%%%%%%%%%%%%%%%%%%%%%%%%%%%%%%%%%%%%%%%%%%%%%%%%%%%%%%%%%%%%%%%%%%%%%%%%%%%%%%%%%%%%%%%%%%%%%%

\section{Introduction}

Adherence to a medication regimen refers to the extent to which patients take medications as prescribed by their physicians \cite{osterberg2005}. Nonadherence to medication is an age-old problem, as even Hippocrates warned physicians to ``keep watch also on the fault of patients which makes them lie about the taking of things prescribed'' \cite{brown2013}. Today in the United States, it is estimated that medication nonadherence accounts for up to 25\% of hospitalizations, 50\% of treatment failures, and around 125,000 deaths per year \cite{kim2018}. Remarkably, the World Health Organization has claimed that improving adherence may have a far greater impact on public health than any improvement in specific medical treatments \cite{who2003, haynes2002}.

Strategies for improving adherence generally focus on (i) communication between the patient and the healthcare provider, (ii) simplifying drug regimens, and (iii) choosing medications which maintain therapeutic effect despite lapses in adherence \cite{osterberg2005}. While adherence is difficult to quantify, adherence is usually reported as the proportion
\begin{align}\label{p0}
p\in[0,1]
\end{align}
of doses of medication actually taken by the patient over time \cite{osterberg2005}. Hence, strategies (i) and (ii) aim to increase the adherence $p$, whereas strategy (iii) seeks to improve treatment efficacy given a $p<1$.

Drugs which are absorbed by the body slowly, or eliminated from the body slowly, are sometimes suggested to accomplish strategy (ii), since they can allow less frequent dosing. An important class of such drugs are called extended release (XR) drugs \cite{gidal2021, wheless2018, vadivelu2011}, which are defined by a very slow release rate compared to their immediate release (IR) counterpart. XR drugs are sometimes called sustained release, slow release, or controlled release \cite{qiu2011}.

To explain more precisely, we recall the standard pharmacokinetic model of oral administration in a single compartment with first order absorption and elimination \cite{gibaldi1982, bauer2015}. In this model, a patient takes a dose of size $D>0$ every $\tau>0$ units of time, and doses are absorbed into the body at rate $\ka>0$ and eliminated at rate $\ke>0$. XR and IR formulations of a drug have identical values of the elimination rate, but the XR absorption rate is much slower than the IR absorption rate. In addition, the XR dosing interval is sometimes larger than the IR dosing interval, with a commensurate larger dose size for the larger dosing interval. For example, XR versions are often dosed once-daily rather than twice-daily \cite{gidal2021}. Using superscripts to denote the XR and IR formulations, this means that
\begin{align}\label{s0}
\ke^{\IR}
=\ke^{\XR},\quad
\ka^{\IR}
\gg\ka^{\XR},\quad
\tau^{\IR}
<\tau^{\XR},\quad
D^{\IR}
<D^{\XR}.
\end{align}

However, the efficacy of XR versus IR drugs when confronted with nonadherence is controversial. While some have emphasized benefits of less frequent dosing allowed by XR drugs, others have sought to alert clinicians of the increased danger following missed doses of XR drugs \cite{pellock2004, bialer2007, levy1993, perucca2009}. Indeed, in a critical review of the use of XR drugs for the treatment of epilepsy, Bialer \cite{bialer2007} warned of the increased risk of breakthrough seizure following a missed dose of a once-daily drug compared to a twice-daily drug.

These two competing views can be understood in terms of the parameters in \eqref{p0} and \eqref{s0}. On one side, decreasing the absorption rate $\ka$ allows for more stable drug concentrations in the body between doses, and increasing the dosing interval $\tau$ is associated with higher adherence rates $p$. On the other side, increasing $\tau$ causes larger fluctuations in the drug concentration in the body between doses, and increasing $\tau$ and the dose size $D$ exacerbates the effects of a missed dose. To weigh these various factors and evaluate these competing views, a more precise, quantitative investigation is needed. 

Another disputed question related to nonadherence is what patients should do following a missed dose. Should patients skip the missed dose or should they take a double dose to compensate? Despite the fact that this is one of the most common questions that patients ask, they often do not receive adequate instructions for what to do after missing a dose \cite{howard1999, albassam2020, gilbert2002}. Most answers to this question focus on the role of the (terminal) drug half-life $t_{1/2}$, which is related to the elimination rate {\blue in the single compartment model described above} via
\begin{align*}
t_{1/2}
=\frac{\ln2}{\ke}
\approx\frac{0.69}{\ke}.
\end{align*}
Curiously, long half-lives are sometimes cited as the reason to skip a missed dose \cite{albassam2020, gilbert2002} and sometimes cited as the reason to take a double dose to make up for a missed dose \cite{jonklaas2014}. To our knowledge, the role of the absorption rate $\ka$ in the proper handling of missed doses has not been investigated. 

In this paper, we formulate and analyze a mathematical model for the drug concentration in an imperfectly adherent patient. Our model takes the form of the standard single compartment pharmacokinetic model with first order absorption and elimination described above, except that the patient takes medication only at a given proportion $p$ of the prescribed dosing times. Doses are missed randomly, and we study the resulting random drug level in the body. We find explicit formulas for various pharmacologically important statistics, including (a) the average drug level {\blue in} the patient and (b) how the drug levels in the patient deviate from the average level in a perfectly adherent patient, which we call the error. We then use these formulas to investigate how drug levels depend on the various parameters and on how missed doses are handled. In particular, we investigate the effects of skipping a missed dose versus taking a double dose to compensate, which we refer to respectively as the single dose protocol and the double dose protocol. The model generalizes our previous model in \cite{counterman2021} which assumed an infinite absorption rate (i.e.\ $\ka=\infty$). The model is illustrated in Figure~\ref{figschem}.

%%%%%%%%%%%%%%%%%%%%%
\begin{figure}[t]%[htp]
\centering
\includegraphics[width=1\linewidth]{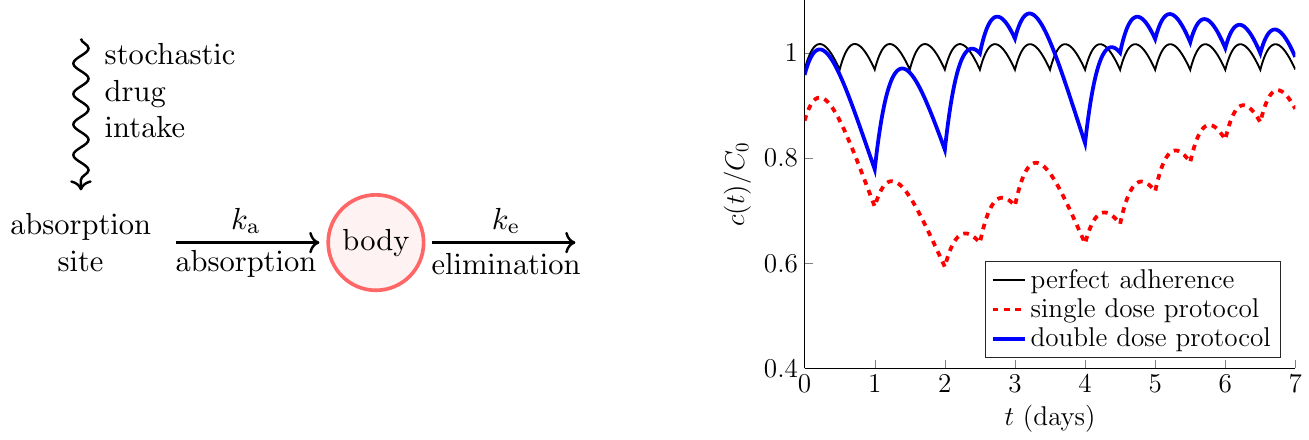}
\caption{The left panel shows a schematic diagram of the model, which involves the standard pharmacokinetic model in which the body is modeled by a single compartment and the drug is absorbed into the body from the absorption site at rate $\ka$ and eliminated at rate $\ke$. We model nonadherence by a stochastic drug intake in which the patient randomly misses doses. The right panel plots sample time courses of the drug concentration for perfect adherence (thin black curve) and imperfect adherence for the single dose protocol (dashed red curve) and the double dose protocol (thick blue curve).}
\label{figschem}
\end{figure}
%%%%%%%%%%%%%%%%%%%%%

Our analysis has several salient pharmacological implications for designing drug regimens to mitigate nonadherence. 
First, the double dose protocol substantially increases the average drug level compared to the single dose protocol for typical adherence rates $p$. Second, slower absorption rates, slower elimination rates, and shorter dosing intervals all decrease the drug level {error}s. Third, as long as the absorption and elimination rates are not both fast compared to the dosing interval, we find that the double dose protocol has markedly smaller {error}s than the single dose protocol. Furthermore, in this case of slow absorption and/or slow elimination, we show that the double dose protocol cannot cause the drug levels in the body to rise much above the average drug level in a perfectly adherent patient. Finally, we show how our results can be used to compare specific drug regimens and design XR drugs to deal with the challenge of nonadherence.

The rest of the paper is organized as follows. We formulate the model in section~\ref{model} and analyze it in section~\ref{math}. Mathematically, the model involves generalizing a class of random variables whose rather exotic distributions have been studied in the pure mathematics literature for many years, dating back to Paul Erd\H{o}s and others in the 1930s \cite{jessen1935, kershner1935, erdos1939}. In section~\ref{pharm}, we explore some pharmacological implications of the model. While the general mathematical analysis in section~\ref{math} is flexible to allow for correlations in missed doses (i.e.\ the patient can be more or less likely to miss a dose following a missed dose(s)), for simplicity we focus in section~\ref{pharm} on the case that the patient misses doses independently of their prior behavior. We note, however, that the actual number of doses taken by the patient at different dosing times is not independent for the double dose protocol analyzed in section~\ref{pharm}. In the Discussion section, we discuss our results in the context of related work and some specific XR and IR drugs. {\blue We also discuss how our results could be tested in clinical trials.} The proofs of the theorems are presented in the Appendix.

%%%%%%%%%%%%%%%%%%%%%%%%%%%%%%%%%%%%%%%%%%%%%%%%%%%%%%%%%%%%%%%%%%%%%%%%%%%%%%%%%%%%%%%%%%%%%%%%%%%%%%%%%%%%%%%%%%%%%%%%%%%%%%%%%%%%%%%%%%%%
\section{Mathematical model}\label{model}

We begin by reviewing the standard pharmacokinetic model of oral administration in a single compartment with first order absorption and elimination \cite{gibaldi1982, bauer2015}. The drug concentration in the body at time ${t}>0$ is denoted by $c(t)$ and it evolves according to the following ordinary differential equation (ODE),
\begin{align}\label{code}
\frac{\dd c}{\dd {t}}
=\ka\frac{g}{V}-\ke c,
\end{align}
where $\ka$ is the absorption rate, $\ke$ is the elimination rate, $V$ is the volume of distribution, and $g$ is the amount of the drug at the absorption site. The amount $g$ follows the ODE,
\begin{align}\label{gode}
\frac{\dd g}{\dd {t}}
=-\ka g+I({t}),
\end{align}
where $I({t})$ is the drug input, which depends on the timing and sizes of drug doses actually taken by the patient. In the standard model of perfect adherence, $I({t})$ is a deterministic sum of Dirac delta functions. In our model of imperfect adherence, $I({t})$ is a random sum of Dirac delta functions, and we study the probability distribution of the resulting random drug concentration $c({t})$ that is subject to this random forcing $I({t})$. In our previous work \cite{counterman2021}, we studied this model in the simplified case in which $\ka=\infty$.

\subsection{Perfect adherence}

Suppose a dose of size $D>0$ is prescribed at regular time intervals of length $\tau>0$. The drug input for perfect adherence is
\begin{align}\label{iperf}
I^{\perf}({t})
=DF\sum_{n\ge0}\delta_{\textup{dirac}}({t}-n\tau),
\end{align}
where $F\in(0,1]$ is the bioavailability and $\delta_{\textup{dirac}}$ is the Dirac delta function. Solving the ODEs \eqref{code}-\eqref{iperf} yields the drug concentration at time ${t}\ge0$ in the perfectly adherent patient \cite{bauer2015},
\begin{align}\label{cstart}
c^{\perf}({t})
:=
\kappa\sum_{n=0}^{N({t})}\Big(e^{-\ke({t}-n\tau)}-e^{-\ka({t}-n\tau)}\Big),
\end{align}
where $\kappa$ is the (possibly negative) concentration scale,
\begin{align}\label{concscale}
\kappa
:=\frac{DF}{V}\frac{\ka}{\ka-\ke}\in\R,
\end{align}
and $N({t})+1$ is the number of dosing times which occur before time ${t}$,
\begin{align*}
N({t})
:=\sup\{n\ge0:n\le {t}/\tau\}.
\end{align*}
We assume $\ka\neq\ke$ throughout this paper.

%%%%%%%%%%%%%%%%%%%%%%%%%%%%%%%%%%%%%%%%%%%%%%%%%%%%%%%%%%%%%%%%%%%%%%%%%%%%%%%%%%%%%%%%%%%%%%%%%%%%%%%%%%%%%%%%%%%%%%%%%%%%%%%%%%%%%%%
\subsection{Nonadherence}

We now suppose that at each dosing time, the patient ``remembers'' or ``forgets'' to take medication with respective probabilities $p\in(0,1]$ and $1-p$. To describe this more precisely, let $\{\xi_{n}\}_{n}$ be a sequence of Bernoulli random variables with parameter $p$, which means
\begin{align}\label{xin}
\begin{split}
\xi_{n}
=\begin{cases}
1 & \text{with probability }p,\\
0 & \text{with probability }1-p.
\end{cases}
\end{split}
\end{align}
The event $\xi_{n}=1$ means that the patient takes medication at the $n$th dosing time. Our mathematical analysis in section~\ref{math} allows $\xi_{n}$ and $\xi_{m}$ to be dependent (see section~\ref{math} for precise assumptions), which means that the patient can be more or less likely to remember (or forget) at the $m$th dosing time if they remember (or forget) at the $n$th dosing time. However, for simplicity the applications in section~\ref{pharm} focus on the case that $\{\xi_{n}\}_{n}$ are independent.

Letting $Df_{n}\ge0$ denote the drug amount taken at the $n$th dosing time, the drug input is
\begin{align}\label{iimp}
I({t})
=DF\sum_{n\ge0}\delta_{\textup{dirac}}({t}-n\tau)f_{n}.
\end{align}
Solving the ODEs \eqref{code}-\eqref{gode} with the drug input given by \eqref{iimp} yields the drug concentration in the imperfectly adherent patient,
\begin{align}\label{CC0}
c({t})
=
\kappa\sum_{n=0}^{N({t})}\Big(e^{-\ke({t}-n\tau)}-e^{-\ka({t}-n\tau)}\Big)f_{n}.
\end{align}
Note that \eqref{CC0} becomes \eqref{cstart} if $f_{n}=1$ for all $n$. 
Since the patient cannot take medication when they forget, it is natural to take
\begin{align*}
f_{n}=0,\quad\text{if }\xi_{n}=0.
\end{align*}
However, since the patient may take more than a single dose to make up for prior missed doses, we allow
\begin{align*}
f_{n}> 1,\quad \text{if }\xi_{n}=1.
\end{align*}
In the general analysis below, $f_{n}$ is a function of the history $\{\xi_{i}\}_{i=n-m}^{n}$ for some $m\ge0$, and we refer to a choice of $f_{n}$ as a ``dosing protocol.''

A common dosing protocol is for the patient to simply take a single dose when they remember, which means
\begin{align}
\label{fsingle}
f_{n}
=f_{n}^{\textup{single}}
:=
\begin{cases}
0 & \text{if }\xi_{n}=0,\\
1 &  \text{if }\xi_{n}=1.
\end{cases}
\end{align}
We call \eqref{fsingle} the ``single dose protocol.'' Another simple dosing protocol is for the patient to take a double dose to make up for a missed dose at the prior dosing time, which means
\begin{align}
\begin{split}\label{fdouble}
f_{n}
=f_{n}^{\textup{double}}
:=\begin{cases}
0 & \text{if }\xi_{n}=0,\\
1 & \text{if }\xi_{n}=1,\,\xi_{n-1}=1,\\
2 & \text{if }\xi_{n}=1,\,\xi_{n-1}=0.
\end{cases}
\end{split}
\end{align}
We call \eqref{fdouble} the ``double dose protocol.''

%%%%%%%%%%%%%%%%%%%%%%%%%%%%%%%%%%%%%%%%%%%%%%%%%%%%%%%%%%%%%%%%%%%%%%
\subsection{Drug level statistics}\label{stats}

We now introduce some drug level statistics which we will use to compare different drug regimens and adherence rates. First, define the average drug concentration in a perfectly adherent patient (i.e.\ $p=1$),
\begin{align}\label{cp}
\C
:=\lim_{T\to\infty}\frac{1}{T}\int_{0}^{T}c^{\perf}(t)\,\dd t
=\frac{DF}{V}\frac{1}{\ke\tau},
\end{align}
where the final equality follows from \eqref{cstart}. 
We note that $\C=\AUC^{\perf}/\tau$, where $\AUC^{\perf}=DF/(V\ke)$ is the so-called ``area under the curve'' statistic commonly used in pharmacokinetics for perfect adherence \cite{gibaldi1982}. We consider $\C$ to be the desired drug level, and thus the statistics below are measured relative to $\C$.

To measure how nonadherence reduces the average drug concentration compared to $\C$, define the relative mean,
\begin{align}\label{mud}
\mu
:=\frac{1}{\C}\bigg(\lim_{T\to\infty}\frac{1}{T}\int_{0}^{T}c(t)\,\dd t\bigg).
\end{align}
To measure how the drug concentration in a patient deviates from $\C$ over time, define the {error},
\begin{align}\label{decomp}
\eps
:=\frac{1}{\C}\sqrt{\lim_{T\to\infty}\frac{1}{T}\int_{0}^{T}\big(c(t)-\C\big)^{2}\,\dd t}.
\end{align}
We note that statistics of the form \eqref{decomp} are often called relative root mean squared {error}s. 
Finally, since high drug levels may be toxic, we measure how the highest possible drug concentration compares to $\C$ via
\begin{align}\label{theta}
\theta
:=\frac{1}{\C}\Big(\sup_{t\ge0,\xi}c(t)-\C\Big).
\end{align}
Here, $\sup_{t\ge0,\xi}$ denotes the supremum over time and over all possible patterns $\xi=\{\xi_{n}\}_{n}$ of the patient remembering or forgetting. In particular, the definition of $\theta$ ensures that the drug concentration in the patient always satisfies
\begin{align*}
c(t)\le(1+\theta)\C\quad\text{for all }t\ge0.
\end{align*}

To summarize, an ideal drug regimen would generally aim to have $\mu\approx1$ and small values of $\eps$ and $\theta$. We note that aiming to have small values of $\eps$ and $\theta$ accords with the adage that ``flatter is better'' for the drug concentration time course \cite{panayiotopoulos2010}. We further note that the statistical measures $\mu$, $\eps$, and $\theta$ are all dimensionless, and they are therefore independent of the units used to measure drug amounts, volumes, time, etc. {\blue Also, $\mu$, $\eps$, and $\theta$ are symmetric functions of $\alpha:=e^{-\ke\tau}$ and $\beta:=e^{-\ka\tau}$ (meaning, $\mu$ with $\alpha=a\in(0,1)$ and $\beta=b\in(0,1)$ is equal to  $\mu$ with $\alpha=b$ and $\beta=a$, and the analogous statements for $\eps$ and $\theta$). To see this symmetry, observe that \eqref{concscale} and \eqref{cp} imply that (i) $\kappa(e^{-\ke s}-e^{-\ka s})/\C$ is a symmetric function of $\alpha$ and $\beta$ for any $s$, (ii) $c(t)$ involves a sum of terms of the form $\kappa(e^{-\ke s}-e^{-\ka s})$, and (iii) $\mu$, $\eps$, and $\theta$ only depend on $c(t)$ and $\C$ via the ratio $c(t)/\C$.}

%%%%%%%%%%%%%%%%%%%%%%%%%%%%%%%%%%%%%%%%%%%%%%%%%%%%%%%%%%%%%%%%%%%%%%%%%%%%%%%%%%%%%%%%%%%%%%%%%%%%%%%%%%%%%%%%%%%%%%%%%%%%%%%%%%%%%%%%%%%%
\section{Mathematical analysis}\label{math}

We now analyze the mathematical model described above. Readers who are interested primarily in the pharmacological implications of the model may wish to skip to section~\ref{pharm}. The proofs of the theorems are in the Appendix.

%%%%%%%%%%%%%%%%%%%%%%%%%%%%%%%%%%%%%%%%%%%%%%%%%%%%%%%%%%%%%%%%%%%%%%%%%%%%
\subsection{General setting}\label{general}

Let $\{\xi_{n}\}_{n\in\Z}$ be a bi-infinite, not necessarily independent, sequence of Bernoulli random variables. Fix an integer $m\ge0$, and let $\{X_{n}\}_{n\in\Z}$ be the history process,
\begin{align}\label{hist0}
X_{n}=(\xi_{n-m},\xi_{n-m+1},\dots,\xi_{n-1},\xi_{n})\in\{0,1\}^{m+1},
\end{align}
which tracks whether or not the patient remembered at dosing time $n$ and the previous $m$ dosing times. We assume that $\{X_{n}\}_{n\in\Z}$ is an irreducible, discrete-time Markov chain on the state space $\{0,1\}^{m+1}$ with time-homogeneous transition matrix \cite{norris1998}
\begin{align}\label{P}
P
=\{P(x,y)\}_{x,y\in\{0,1\}^{m+1}}
\in\R^{2^{m+1}\times2^{m+1}}.
\end{align}
The entry in the $x$-row and $y$-column of $P$ is defined by
\begin{align*}
P(x,y)
=\P(X_{1}=y\,|\,X_{0}=x),\quad x,y\in\{0,1\}^{m+1},
\end{align*}
where $x\in\{0,1\}^{m+1}$ denotes the vector,
\begin{align*}
x
=(x_{-m},x_{-m+1},\dots,x_{-1},x_{0})\in\{0,1\}^{m+1},
\end{align*}
and $y\in\{0,1\}^{m+1}$ is analogous. We assume that $\{X_{n}\}_{n\in\Z}$ is a stationary sequence and let
\begin{align*}
\pi
=\{\pi(x)\}_{x\in\{0,1\}^{m+1}}\in\R^{2^{m+1}}
\end{align*}
denote its distribution, which means
\begin{align}\label{pidef}
\pi(x)
=\P(X_{n}=x),\quad n\in\Z,\,x\in\{0,1\}^{m+1},
\end{align}
and $\pi$ satisfies the system of linear algebraic equations,
\begin{align}\label{alg1}
P^{\top}\pi=\pi,
\end{align}
where $P^{\top}$ denotes the transpose of $P$.

In the special case that $\{\xi_{n}\}_{n\in\Z}$ are independent and identically distributed (iid) with $\P(\xi_{n}=1)=p$, it is immediate that \cite{counterman2021}
\begin{align}\label{Peasy}
P(x,y)=\begin{cases}
p & \text{if }y_{0}=1,\,(x_{-m+1},\dots,x_{0})=(y_{-m},\dots,y_{-1}),\\
1-p & \text{if }y_{0}=0,\,(x_{-m+1},\dots,x_{0})=(y_{-m},\dots,y_{-1}),\\
0 & \text{otherwise},
\end{cases}
\end{align}
and
\begin{align}\label{pi}
\pi(x)
:=\P(X_{n}=x)
=p^{s(x)}(1-p)^{m+1-s(x)}>0,\quad n\in\Z,\,x\in\{0,1\}^{m+1},
\end{align}
where $s(x)\in\{0,1,\dots,m+1\}$ is the number of $1$'s in the vector $x\in\{0,1\}^{m+1}$,
\begin{align*}
s(x)
:=\sum_{k=0}^{m}x_{k}.
\end{align*}
A dosing protocol is any nonnegative function on the state space of $X_{n}$,
\begin{align}\label{f}
f:\{0,1\}^{m+1}\mapsto[0,\infty).
\end{align}

%%%%%%%%%%%%%%%%%%%%%%%%%%%%%%%%%%%%%%%%%%%%%%%%%%%%%%%%%%%%%%%%%%%%%%%%%%%%%%%%%%%%%%%%%%%%%%%%%%%%%%%%%%%%%%%%%%%%%%%%%%%%%%%%%%%%%%%%%%%%%%%%%%%%%%
\subsection{General drug level statistics}

We now analyze the distribution of the drug concentration under the general setup of section~\ref{general}. To describe this distribution, define
\begin{align}\label{AB0}
A
:=\sum_{n=0}^{\infty}\alpha^{n}f(X_{-n}),\quad
B
:=\sum_{n=0}^{\infty}\beta^{n}f(X_{-n}),
\end{align}
where we have defined the dimensionless constants,
\begin{align}\label{alphabeta}
\alpha:=e^{-\ke{\tau}}\in(0,1),\quad
\beta:=e^{-\ka{\tau}}\in(0,1).
\end{align}
To simplify formulas, we present some results in terms of $\alpha$ and $\beta$ and some results in terms of $\ke$, $\ka$, and $\tau$.

We note that the random variables $A$ and $B$ in \eqref{AB0} generalize a class of random variables known as infinite Bernoulli convolutions. In particular, in the special case that $f$ is the single dose protocol and $\{\xi_{n}\}_{n\in\Z }$ are iid, then $A$ and $B$ are (after a linear transformation) standard infinite Bernoulli convolutions, which are known for their very irregular distributions and have been studied in the pure math literature for many decades \cite{jessen1935, kershner1935, erdos1939, peres2000, solomyak1995, peres1998, escribano2003, hu2008}.

The following theorem shows that the drug concentration $c(t)$ in \eqref{CC0} converges in distribution at large time and describes the limiting distribution in terms of $A$ and $B$ in \eqref{AB0}. The theorem also gives the large time moments of $c(t)$ in terms of the moments of $A$ and $B$, where the moments of $c(t)$ can be computed equivalently as either time averages or ensemble averages. 

\begin{theorem}[Large time drug concentration]\label{lt}
Let $c(t)$ be the drug concentration in \eqref{CC0} with $f_{n}:=f(X_{n})$. Then, for any time $t\in[0,\tau]$, we have the following convergence in distribution,
\begin{align}\label{cdt}
c(N\tau+t)
\to_{\dd}C(t)
:=\kappa\big(\alpha^{t/\tau}A-\beta^{t/\tau}B\big),\quad\text{as }N\to\infty.
\end{align}
Furthermore, for any time $t\in[0,\tau]$ and any moment $j>0$, we have that with probability one,
\begin{align}\label{mct}
\lim_{N\to\infty}\frac{1}{N}\sum_{n=0}^{N-1}\big(c(n\tau+t)\big)^{j}=
\lim_{N\to\infty}\E\big[(c(N\tau+t))^{j}\big]
=\E\big[(C(t))^{j}\big].
\end{align}
Furthermore, for any moment $j>0$, we have that with probability one,
\begin{align}
\lim_{T\to\infty}\frac{1}{T}\int_{0}^{T}\big(c(t)\big)^{j}\,\dd t
=\lim_{N\to\infty}\frac{1}{\tau}\int_{0}^{\tau}\E\big[(c(N\tau+t))^{j}\big]\,\dd t
=\frac{1}{\tau}\int_{0}^{\tau}\E\big[(C(t))^{j}\big]\,\dd t,\label{mcti1}\\
\lim_{N\to\infty}\frac{1}{N}\sum_{n=0}^{N-1}\Big(\int_{n\tau}^{(n+1)\tau}c(t)\,\dd t\Big)^{j}
=\lim_{N\to\infty}\E\Big[\Big(\int_{0}^{\tau}c(N\tau+t)\,\dd t\Big)^{j}\Big]
=\E\big[(\AUC)^{j}\big],\label{mcti2}
\end{align}
where $\AUC:=\int_{0}^{\tau}C(t)\,\dd t$.
\end{theorem}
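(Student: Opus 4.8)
The plan is to prove everything from a single re-indexing of the convolution sum \eqref{CC0} combined with Birkhoff's ergodic theorem. Writing a time as $N\tau+s$ with $s\in[0,\tau]$, so that $N(N\tau+s)=N$ (the extra term appearing at $s=\tau$ vanishes since $e^{-\ke\cdot0}-e^{-\ka\cdot0}=0$), and substituting $k=N-n$, one gets, using $e^{-\ke s}=\alpha^{s/\tau}$ and $e^{-\ka s}=\beta^{s/\tau}$,
\[
c(N\tau+s)=\kappa\big(\alpha^{s/\tau}A_N-\beta^{s/\tau}B_N\big),\qquad A_N:=\sum_{k=0}^{N}\alpha^{k}f(X_{N-k}),\quad B_N:=\sum_{k=0}^{N}\beta^{k}f(X_{N-k}).
\]
Since $f$ is nonnegative on the finite set $\{0,1\}^{m+1}$, it is bounded by $M:=\max f$, so $0\le A_N\le M/(1-\alpha)$ and $0\le B_N\le M/(1-\beta)$ are bounded \emph{deterministically}; together with the elementary fact that $\kappa>0\iff\alpha>\beta$ and $\kappa<0\iff\alpha<\beta$ (read off from \eqref{concscale} and \eqref{alphabeta}), this yields $0\le c(t)\le M_0$ for a deterministic $M_0$ and, the same way, $0\le C(t)\le M_0$ and $0\le\AUC\le M_0\tau$. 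This nonnegativity is what makes $(c(t))^{j}$, $(C(t))^{j}$, $(\AUC)^{j}$ well defined for non-integer $j>0$.

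For the convergence in distribution \eqref{cdt}: by stationarity of $\{X_n\}_{n\in\Z}$ the vector $(X_N,\dots,X_0)$ has the same law as $(X_0,\dots,X_{-N})$, so $(A_N,B_N)$ has the same law as $\big(\sum_{k=0}^N\alpha^k f(X_{-k}),\sum_{k=0}^N\beta^k f(X_{-k})\big)$, and the latter increases almost surely to $(A,B)$. Hence $(A_N,B_N)\to_{\dd}(A,B)$, and applying the continuous map $(a,b)\mapsto\kappa(\alpha^{s/\tau}a-\beta^{s/\tau}b)$ gives \eqref{cdt}. Because $c(N\tau+t)$ is uniformly bounded, weak convergence upgrades to convergence of all moments, giving the second equality in \eqref{mct}. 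Applying instead the continuous functional $(a,b)\mapsto\int_0^\tau\kappa(\alpha^{s/\tau}a-\beta^{s/\tau}b)\,\dd s$ gives $\int_0^\tau c(N\tau+s)\,\dd s\to_{\dd}\AUC$ and, again by uniform boundedness, the second equality in \eqref{mcti2}; dominated convergence in $t\in[0,\tau]$ then gives the second equality in \eqref{mcti1}.

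The heart of the proof is the almost sure ``time average'' equalities, and the \emph{main obstacle} is that $n\mapsto c(n\tau+t)$ is not stationary: it is built from the growing window $X_0,\dots,X_n$ rather than a fixed one. The fix is to compare it with the genuinely stationary companion $\hat c_n(s):=\kappa(\alpha^{s/\tau}\hat A_n-\beta^{s/\tau}\hat B_n)$, where $\hat A_n:=\sum_{k\ge0}\alpha^k f(X_{n-k})$ and $\hat B_n:=\sum_{k\ge0}\beta^k f(X_{n-k})$ are fixed measurable functionals of the two-sided path shifted by $n$; note $\hat c_0(s)=C(s)$ exactly. Since an irreducible finite-state Markov chain is stationary and ergodic, Birkhoff's theorem gives, a.s., $\frac1N\sum_{n=0}^{N-1}(\hat c_n(t))^{j}\to\E[(\hat c_0(t))^{j}]=\E[(C(t))^{j}]$, and likewise $\frac1N\sum_{n=0}^{N-1}\int_0^\tau(\hat c_n(s))^{j}\,\dd s\to\int_0^\tau\E[(C(s))^{j}]\,\dd s$ and $\frac1N\sum_{n=0}^{N-1}\big(\int_0^\tau\hat c_n(s)\,\dd s\big)^{j}\to\E[(\AUC)^{j}]$.

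The discrepancy between the true averages and these stationary ones is negligible: $|A_n-\hat A_n|\le M\alpha^{n+1}/(1-\alpha)$ and similarly for $B$, so $|c(n\tau+s)-\hat c_n(s)|\le C_1\rho^{n}$ with $\rho:=\max\{\alpha,\beta\}<1$, uniformly in $s$; since there is a constant $L_j$ with $|a^{j}-b^{j}|\le L_j|a-b|^{\min\{1,j\}}$ for $a,b\in[0,M_0]$, the term-by-term differences between the $c$-averages and the $\hat c$-averages decay geometrically, so their Cesàro means vanish. Adding the two contributions gives the first equality in \eqref{mct}; the first equality in \eqref{mcti1} follows after also writing $\frac1T\int_0^T=\frac{N}{T}\cdot\frac1N\sum_{n=0}^{N-1}\int_{n\tau}^{(n+1)\tau}+\frac1T\int_{N\tau}^T$, with $N/T\to1/\tau$ and the last term $O(1/T)$, and noting $\int_{n\tau}^{(n+1)\tau}(c(t))^{j}\,\dd t=\int_0^\tau(c(n\tau+s))^{j}\,\dd s$; the first equality in \eqref{mcti2} is the same with $\int_0^\tau(c(n\tau+s))^{j}\,\dd s$ replaced by $\big(\int_{n\tau}^{(n+1)\tau}c(t)\,\dd t\big)^{j}=\big(\int_0^\tau c(n\tau+s)\,\dd s\big)^{j}$. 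I expect the only delicate points to be the (standard) ergodicity of the irreducible finite chain and the Hölder-type bound used to push the geometric error estimate through the non-integer power $x\mapsto x^{j}$.
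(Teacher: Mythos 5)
Your proposal is correct and follows essentially the same route as the paper's proof: the same split of $c(N\tau+t)$ into the finite-window sum and its stationary infinite-window companion (your $\hat c_n$ is the paper's $C_{-\infty,n}(t)$), stationarity to identify the law of the shifted sums, bounded convergence for the moment limits, Birkhoff's ergodic theorem for the time averages, and a geometrically decaying error to transfer between the two processes. The only cosmetic difference is that you control $|a^{j}-b^{j}|$ via a H\"older/Lipschitz bound on a bounded interval, whereas the paper uses the binomial theorem for the integer part of $j$ and subadditivity of $a\mapsto a^{j-\floor*{j}}$ for the fractional part; both are valid.
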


To summarize Theorem~\ref{lt}, if a patient has been taking the drug for a long time, then the statistics of the drug concentration (either averaged over time or averaged over realizations of their stochastic adherence) can be obtained from statistics of $C(t)$. {\blue In particular, Theorem~\ref{lt} implies that the relative mean $\mu$ in \eqref{mud} can be written as the following ensemble average,
\begin{align}\label{muensemble}
\mu
=\frac{1}{\C\tau}\int_{0}^{\tau}\E[C(t)]\,\dd t.
\end{align}
Similarly, Theorem~\ref{lt} implies that the error $\eps$ in \eqref{decomp} can be written as the following ensemble average,
\begin{align}\label{epsensemble}
\eps
=\frac{1}{\C}\sqrt{\frac{1}{\tau}\int_{0}^{\tau}\Big(\E[C^{2}(t)]-2\C\E[C(t)]+\C^{2}\Big)\,\dd t}.
\end{align}}

The following theorem thus computes the first and second moments of $C(t)$. We emphasize that the theorem holds for a general dosing protocol $f$ and a general sequence $\{\xi_{n}\}_{n\in\Z}$ of Bernoulli random variables as described in section~\ref{general} (i.e.\ $\{\xi_{n}\}_{n\in\Z}$ need not be independent).

\begin{theorem}[First and second moments]\label{thm12}
The first moment of $C(t)$ is
\begin{align}\label{ob1}
\E[C(t)]
&=\kappa\big(\alpha^{t/\tau}\E[A]-\beta^{t/\tau}\E[B]\big),
\end{align}
where $\kappa$ is defined in \eqref{concscale} and
\begin{align}\label{nob1}
\E[A]
=\frac{1}{1-\alpha}\sum_{x}f(x)\pi(x),\quad
\E[B]
=\frac{1}{1-\beta}\sum_{x}f(x)\pi(x),
\end{align}
where $\sum_{x}$ denotes the sum over all $x\in\{0,1\}^{m+1}$, and $\pi$ is defined in \eqref{pidef}.

Define the vectors,
\begin{align}\label{alg2}
u_{\alpha}
:=(I-\alpha P^{\top})^{-1}v\in\R^{2^{m+1}},
\quad
u_{\beta}
:=(I-\beta P^{\top})^{-1}v\in\R^{2^{m+1}},
\end{align}
where $I\in\R^{2^{m+1}\times2^{m+1}}$ is the identity matrix, $P^{\top}$ is the transpose of $P$ in \eqref{P}, and $v\in\R^{2^{m+1}}$ is the vector with entries $v(x)=f(x)\pi(x)$ for $x\in\{0,1\}^{m+1}$. The second moment of $C(t)$ is
\begin{align}\label{ob2}
\E[(C(t))^{2}]
&=\kappa^{2}\big(\alpha^{2t/\tau}\E[A^{2}]-2\alpha^{t/\tau}\beta^{t/\tau}\E[AB]+\beta^{2t/\tau}\E[B^{2}]\big),
\end{align}
where
\begin{align}\label{nob2}
\begin{split}
\E[AB]
&=\frac{1}{1-\alpha\beta}\Big(\sum_{x}f(x)(u_{\alpha}(x)+u_{\beta}(x))-\sum_{x}(f(x))^{2}\pi(x)\Big),
\end{split}
\end{align}
and the formula for $\E[A^{2}]$ is obtained from \eqref{nob2} by replacing $\beta$ by $\alpha$, and the formula for $\E[B^{2}]$ is obtained from \eqref{nob2} by replacing $\alpha$ by $\beta$.
\end{theorem}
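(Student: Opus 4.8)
The plan is to compute the needed expectations by exploiting the self-similar (renewal-type) structure of the random series $A$ and $B$ in \eqref{AB0}, together with the stationarity of the history chain $\{X_{n}\}_{n\in\Z}$. First I would handle the first moments. Since $A=\sum_{n\ge0}\alpha^{n}f(X_{-n})$ and $\{X_{-n}\}_{n\ge0}$ is a stationary sequence with marginal law $\pi$, linearity of expectation and the geometric sum $\sum_{n\ge0}\alpha^{n}=1/(1-\alpha)$ give $\E[A]=\frac{1}{1-\alpha}\sum_{x}f(x)\pi(x)$ immediately, and likewise for $\E[B]$; plugging into the definition $C(t)=\kappa(\alpha^{t/\tau}A-\beta^{t/\tau}B)$ from Theorem~\ref{lt} yields \eqref{ob1}. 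The only point requiring a word of care is absolute convergence/integrability so that the interchange of sum and expectation is legitimate: since $0<\alpha,\beta<1$ and $f$ is bounded on the finite state space $\{0,1\}^{m+1}$, the series converge absolutely and all moments are finite, so Fubini/Tonelli applies throughout.

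For the second moments, the key identity is the one-step recursion $A=f(X_{0})+\alpha A'$, where $A':=\sum_{n\ge0}\alpha^{n}f(X_{-1-n})$ has the same law as $A$ (by stationarity of the chain), and similarly $B=f(X_{0})+\beta B'$. Squaring and taking expectations, $\E[AB]=\E[f(X_{0})^{2}]+\alpha\E[f(X_{0})B']+\beta\E[f(X_{0})A']+\alpha\beta\,\E[A'B']$, and since $(A',B')\overset{d}{=}(A,B)$ jointly, $\E[A'B']=\E[AB]$, which after rearranging gives
\begin{align*}
\E[AB]=\frac{1}{1-\alpha\beta}\Big(\E[f(X_{0})^{2}]+\alpha\,\E[f(X_{0})B']+\beta\,\E[f(X_{0})A']\Big).
\end{align*}
It remains to evaluate the cross terms $\E[f(X_{0})A']=\sum_{n\ge0}\alpha^{n}\E[f(X_{0})f(X_{-1-n})]$. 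Using the Markov property and stationarity, $\E[f(X_{0})f(X_{-1-n})]=\sum_{x,y}\pi(x)\,P^{n+1}(x,y)\,f(x)f(y)$, so that $\alpha\,\E[f(X_{0})A']=\sum_{x}f(x)\sum_{y}\big(\sum_{k\ge1}\alpha^{k}P^{k}(x,y)\big)\,v(y)$ with $v(y)=f(y)\pi(y)$. The inner geometric matrix series sums to $(I-\alpha P)^{-1}-I$ applied appropriately; recognizing that $\sum_{k\ge0}\alpha^{k}(P^{\top})^{k}v=(I-\alpha P^{\top})^{-1}v=u_{\alpha}$, one identifies $\sum_{y}\big(\sum_{k\ge1}\alpha^{k}P^{k}(x,y)\big)v(y)=u_{\alpha}(x)-v(x)$, giving $\alpha\,\E[f(X_{0})A']=\sum_{x}f(x)u_{\alpha}(x)-\sum_{x}f(x)^{2}\pi(x)$, and symmetrically $\beta\,\E[f(X_{0})B']=\sum_{x}f(x)u_{\beta}(x)-\sum_{x}f(x)^{2}\pi(x)$. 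Substituting these and $\E[f(X_{0})^{2}]=\sum_{x}f(x)^{2}\pi(x)$ into the boxed formula collapses the three $\sum f^{2}\pi$ terms into a single one and produces exactly \eqref{nob2}. The formulas for $\E[A^{2}]$ and $\E[B^{2}]$ follow by the same argument with $(\alpha,\beta)$ replaced by $(\alpha,\alpha)$ and $(\beta,\beta)$ respectively, and then \eqref{ob2} is just the expansion of $C(t)^{2}=\kappa^{2}(\alpha^{t/\tau}A-\beta^{t/\tau}B)^{2}$ together with linearity.

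The main obstacle is purely bookkeeping: correctly matching the direction of the chain (the series runs over $X_{-n}$ for $n\ge0$, i.e.\ backwards in time) with the transition matrix $P$ versus its transpose, and tracking which of $u_{\alpha},u_{\beta}$ carries the vector $v(x)=f(x)\pi(x)$ as opposed to $f(x)$ alone — the asymmetry between the ``outer'' factor $f(x)$ and the ``inner'' factor $v(y)=f(y)\pi(y)$ in $\sum_{x}f(x)u_{\alpha}(x)$ is the one place an index slip would give a wrong answer. Everything else (convergence of the geometric series, invertibility of $I-\alpha P^{\top}$ and $I-\beta P^{\top}$ since $\alpha,\beta\in(0,1)$ and $P$ is stochastic, and the interchanges of summation) is routine and follows from boundedness of $f$ on the finite state space.
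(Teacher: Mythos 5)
Your proposal is correct in substance and follows essentially the same route as the paper: the paper writes the distributional recursion $A=_{\dd}\alpha A+f(X_{1})$, $B=_{\dd}\beta B+f(X_{1})$, squares/multiplies and takes expectations, and then evaluates the cross term $\E[Af(X_{1})]$ by deriving a finite linear system for $u_{\alpha}(x)=\E[A1_{X_{0}=x}]$ via the tower property, whose solution is $(I-\alpha P^{\top})^{-1}v$ --- i.e.\ exactly the Neumann series $\sum_{k\ge0}\alpha^{k}(P^{\top})^{k}v$ that you sum directly. The only differences are cosmetic (you shift at time $0$ rather than time $1$, and you sum the geometric matrix series instead of solving the linear system), and your final identities, including $\alpha\,\E[f(X_{0})A']=\sum_{x}f(x)u_{\alpha}(x)-\sum_{x}(f(x))^{2}\pi(x)$, are right.

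That said, you slipped in exactly the two places you flagged, and it is worth recording them. First, the expansion of $\E[AB]$ should read $\E[f(X_{0})^{2}]+\beta\,\E[f(X_{0})B']+\alpha\,\E[f(X_{0})A']+\alpha\beta\,\E[A'B']$; you swapped $\alpha$ and $\beta$ on the cross terms (your subsequent substitution uses the correct terms, so the end result is unaffected). Second, since $\P(X_{-k}=x,\,X_{0}=y)=\pi(x)P^{k}(x,y)$, the stationary weight $\pi$ attaches to the \emph{past} state, which is the row index of $P^{k}$; the correct grouping is
\begin{align*}
\alpha\,\E[f(X_{0})A']
=\sum_{y}f(y)\sum_{x}\Big(\sum_{k\ge1}\alpha^{k}P^{k}(x,y)\Big)v(x),
\end{align*}
not $\sum_{x}f(x)\sum_{y}\big(\sum_{k\ge1}\alpha^{k}P^{k}(x,y)\big)v(y)$ as you wrote. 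Your version is $\sum_{x}f(x)\big((I-\alpha P)^{-1}v-v\big)(x)$, which involves $P$ rather than $P^{\top}$ and differs from the correct quantity for a non-reversible chain; only with the correct grouping does $\big((P^{\top})^{k}v\big)(y)=\sum_{x}P^{k}(x,y)v(x)$ appear and hence $u_{\alpha}=(I-\alpha P^{\top})^{-1}v$ as in the theorem. Since your final formulas match the corrected bookkeeping, this is a transcription error rather than a gap, but in a clean write-up the intermediate display must be fixed.
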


Theorem~\ref{thm12} can be used to obtain explicit formulas for pharmacologically relevant statistics of the drug level. To find first order statistics (i.e.\ means), one needs only to solve the linear algebraic equations in \eqref{alg1} for $\pi$ which then yields $\E[A]$ and $\E[B]${\blue, }which then yields all the statistics in Theorem~\ref{lt} for $j=1${\blue, including $\mu$ in \eqref{muensemble}}. To then find second order statistics (i.e.\ second moments, deviations, variances, etc.)\ one needs additionally only to find the matrix inverse in \eqref{alg2} to find $\E[A^{2}]$, $\E[AB]$, and $\E[B^{2}]${\blue, }which then yields all the statistics in Theorem~\ref{lt} for $j=2${\blue, including $\eps$ in \eqref{epsensemble}}.

%%%%%%%%%%%%%%%%%%%%%%%%%%%%%%%%%%%%%%%%%%%%%%%%%%%%%%%%%%%%%%%%%%%%%%%%%%%%%%%%%%%%%%%%%%%%%%%%%%%%%%%%%%%%
\subsection{Simple model of nonadherence}\label{simple}

To investigate drug level statistics, we must specify the statistical patterns of patient nonadherence. Mathematically, this means that we must specify the transition matrix $P$ in \eqref{P}. For simplicity, assume that the patient remembers to take their medication at each dosing time with probability $p\in(0,1)$, independent of whether or not they remembered or forgot at prior dosing times. In particular, assume that $\{\xi_{n}\}_{n\in\Z}$ is an iid sequence with $\P(\xi_{n}=1)=p$, and thus $P$ and $\pi$ are given in \eqref{Peasy} and \eqref{pi}. In this case, Theorem~\ref{thm12} yields the following formulas. {\blue We note that we used symbolic algebra software \cite{mathematica} to obtain these formulas from Theorem~\ref{thm12}.}

\begin{corollary}\label{abf}
Assume $\{\xi_{n}\}_{n\in\Z}$ are iid with $\P(\xi_{n}=1)=p$. Then for the single dose protocol, we have that
\begin{align}
\E[A^{\single}]
&=\frac{p}{1-\alpha},
\quad
\E[B^{\single}]
=\frac{p}{1-\beta},\nonumber\\
\E[A^{\single}B^{\single}]
&=\frac{p}{1-\alpha\beta}
+p^{2}\Big(\frac{1}{(1-\alpha)(1-\beta)}-\frac{1}{1-\alpha\beta}\Big).\label{sm}
\end{align}
The formula for $\E[(A^{\single})^{2}]$ (respectively, $\E[(B^{\single})^{2}]$) is given by \eqref{sm} upon replacing $\beta$ by $\alpha$ (respectively, $\alpha$ by $\beta$). Furthermore, for the double dose protocol, we have that
\begin{align}
&\E[A^{\double}]
=\frac{p+p(1-p)}{1-\alpha},
\quad
\E[B^{\double}]
=\frac{p+p(1-p)}{1-\beta},\nonumber\\
\begin{split}
&\E[A^{\double}B^{\double}]
=\frac{4 p}{1-\alpha  \beta }
-p^2 \left(\frac{2 (\alpha +\beta )}{1-\alpha  \beta}-\frac{4}{(1-\alpha) (1-\beta)}+\frac{7}{1-\alpha  \beta}\right)\\
&\quad+\frac{p^3 \left(4 \left(2-\frac{1}{1-\alpha}-\frac{1}{1-\beta}\right)+3 \alpha +3 \beta \right)}{1-\alpha  \beta }-\frac{p^4 \left(\alpha -\frac{1}{1-\alpha}+\beta -\frac{1}{1-\beta }+2\right)}{1-\alpha  \beta }.\label{dm}
\end{split}
\end{align}
The formula for $\E[(A^{\double})^{2}]$ (respectively, $\E[(B^{\double})^{2}]$) is given by \eqref{dm} upon replacing $\beta$ by $\alpha$ (respectively, $\alpha$ by $\beta$).
\end{corollary}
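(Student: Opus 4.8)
The corollary is a direct specialization of Theorem~\ref{thm12} to the iid setting, where the transition matrix $P$ and stationary distribution $\pi$ are given explicitly by \eqref{Peasy} and \eqref{pi}. The plan is to write out, for each of the two protocols, the relevant data — the history length $m$, the protocol $f$, the vector $v$ with $v(x)=f(x)\pi(x)$, and the matrices $I-\alpha P^{\top}$ and $I-\beta P^{\top}$ — and then evaluate the formulas \eqref{nob1}--\eqref{nob2}. For the single dose protocol \eqref{fsingle} it suffices to take $m=0$, so the state space is $\{0,1\}$ with $f(0)=0$, $f(1)=1$, $\pi(0)=1-p$, $\pi(1)=p$. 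For the double dose protocol \eqref{fdouble} one must take $m=1$, so the state space is $\{0,1\}^{2}$ with $f(0,1)=2$, $f(1,1)=1$, $f(0,0)=f(1,0)=0$, and $\pi(x)=p^{s(x)}(1-p)^{2-s(x)}$.

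The first moments are immediate: by \eqref{nob1} we have $\E[A]=(1-\alpha)^{-1}\sum_{x}f(x)\pi(x)$ and $\E[B]=(1-\beta)^{-1}\sum_{x}f(x)\pi(x)$, and a one-line computation gives $\sum_{x}f(x)\pi(x)=p$ for the single dose protocol and $\sum_{x}f(x)\pi(x)=2p(1-p)+p^{2}=p+p(1-p)$ for the double dose protocol. For the cross moment $\E[AB]$ I would form $v$ and solve the linear systems in \eqref{alg2} for $u_{\alpha}=(I-\alpha P^{\top})^{-1}v$ and $u_{\beta}=(I-\beta P^{\top})^{-1}v$; these systems are nonsingular because $P$ is stochastic and $\alpha,\beta\in(0,1)$, so the spectral radii of $\alpha P$ and $\beta P$ are strictly less than $1$. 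Substituting $u_{\alpha}$, $u_{\beta}$, $f$, and $\pi$ into \eqref{nob2} — together with $\sum_{x}(f(x))^{2}\pi(x)=p$ in the single dose case and $=4p(1-p)+p^{2}=4p-3p^{2}$ in the double dose case — yields $\E[AB]$. Replacing $\beta$ by $\alpha$ (which turns $u_{\beta}$ into $u_{\alpha}$, since $(I-\beta P^{\top})^{-1}v\big|_{\beta=\alpha}=(I-\alpha P^{\top})^{-1}v=u_{\alpha}$) then yields $\E[A^{2}]$, and replacing $\alpha$ by $\beta$ yields $\E[B^{2}]$; this is exactly the substitution asserted in the statement.

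Concretely, for the single dose protocol $P^{\top}$ is the $2\times 2$ matrix with rows $(1-p,\,1-p)$ and $(p,\,p)$ and $v=(0,p)^{\top}$, so $u_{\alpha}$ is obtained by inverting a $2\times 2$ matrix by hand, and substitution into \eqref{nob2} produces \eqref{sm}. For the double dose protocol, ordering the states $(0,0),(0,1),(1,0),(1,1)$, the matrix $P$ has nonzero entries $P((0,0),(0,0))=P((1,0),(0,0))=1-p$, $P((0,0),(0,1))=P((1,0),(0,1))=p$, $P((0,1),(1,0))=P((1,1),(1,0))=1-p$, $P((0,1),(1,1))=P((1,1),(1,1))=p$, and $v=(0,\,2p(1-p),\,0,\,p^{2})^{\top}$; here $u_{\alpha}$ requires inverting a $4\times 4$ matrix whose entries are affine in $p$.

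The main (and essentially only) nonroutine step is this last $4\times 4$ inversion and the subsequent collection of the resulting rational function of $\alpha$, $\beta$, and $p$ into the polynomial-in-$p$ form \eqref{dm}; this is the computation for which we invoke symbolic algebra software \cite{mathematica}. As partial checks on the output, I would verify that setting $p=1$ in both \eqref{sm} and \eqref{dm} collapses to $\E[AB]=\big((1-\alpha)(1-\beta)\big)^{-1}$, the value forced by $f_{n}\equiv1$, and that as $p\to 0$ the leading behavior $\E[AB]\sim p(1-\alpha\beta)^{-1}$ (single dose) and $\E[AB]\sim 4p(1-\alpha\beta)^{-1}$ (double dose) matches the single-summand contribution $\E[\alpha^{n}\beta^{n}f(X_{-n})^{2}]$, and likewise that the $\E[A^{2}]$ formulas reproduce the direct computation $\E[(A^{\single})^{2}]=p(1-\alpha^{2})^{-1}+p^{2}\big((1-\alpha)^{-2}-(1-\alpha^{2})^{-1}\big)$.
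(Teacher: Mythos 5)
Your proposal is correct and follows exactly the route the paper takes: the paper gives no separate proof of Corollary~\ref{abf} beyond stating that the formulas follow from Theorem~\ref{thm12} with $P$ and $\pi$ as in \eqref{Peasy}--\eqref{pi} (using $m=0$ for the single and $m=1$ for the double dose protocol) together with symbolic algebra, which is precisely your plan. Your explicit matrices, the hand computation in the $2\times2$ case, and the checks at $p=1$ and $p\to0$ are all consistent with the stated formulas and go somewhat beyond what the paper records.
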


Using {\blue the results of Theorem~\ref{lt} in \eqref{muensemble}-\eqref{epsensemble}, the representations in \eqref{ob1} and \eqref{ob2} in Theorem~\ref{thm12}, and Corollary~\ref{abf}}, we can calculate explicit formulas for the statistics $\mu$ and $\eps$ defined in \eqref{mud} and \eqref{decomp} in section~\ref{stats}. {\blue Again, we used symbolic algebra software \cite{mathematica} to obtain these explicit formulas from \eqref{muensemble}-\eqref{epsensemble}, \eqref{ob1} and \eqref{ob2}, and Corollary~\ref{abf}.}

\begin{corollary}\label{cm}
Assume $\{\xi_{n}\}_{n\in\Z}$ are iid with $\P(\xi_{n}=1)=p$. Using superscripts to denote the single and double dose protocols, the mean $\mu$ in \eqref{mud} is
\begin{align*}
\mu^{\single}
=p
<\mu^{\double}
=p+p(1-p),\quad\text{for all }p\in(0,1).
\end{align*}
For the single dose protocol, the error $\eps$ in \eqref{decomp} is
\begin{align*}
\eps^{\single}
=\sqrt{(\eps^{\perf})^{2}+{\eta}^{\single}},
\end{align*}
where $\eps^{\perf}$ is
\begin{align}\label{ef000}
\eps^{\perf}
&=\sqrt{\frac{\ka \ke \tau  \left(\ka \coth \left(\frac{\ke \tau }{2}\right)-\ke \coth \left(\frac{\ka \tau }{2}\right)\right)}{2 \left(\ka^2-\ke^2\right)}-1},
\end{align}
where $\coth(z):=(e^{z}+e^{-z})/(e^{z}-e^{-z})$ denotes the hyperbolic cotangent, and
\begin{align*}
{\eta}^{\single}
&=(1-p)\frac{\big(\ka^2 (4-\ke \tau  (\frac{4}{e^{\ke \tau }-1}+1))+\frac{\ka \ke^2 \tau  (e^{\ka \tau }+3)}{e^{\ka \tau }-1}-4 \ke^2\big)}{2 (\ka^2-\ke^2)}\\
&\quad+(1-p)^{2}\frac{\ka \ke \tau  (\ka e^{\ka \tau }-\ka-\ke e^{\ke \tau }+\ke)}{(\ka^2-\ke^2) (e^{\ka \tau }-1) (e^{\ke \tau }-1)}.
\end{align*}
For the double dose protocol, the error $\eps$ in \eqref{decomp} is 
\begin{align*}
\eps^{\double}
=\sqrt{(\eps^{\perf})^{2}+{\eta}^{\double}},
\end{align*}
where $\eps^{\perf}$ is in \eqref{ef000} and 
\begin{align*}
{\eta}^{\double}
=(1-p)h_{1}+(1-p)^{2}h_{2}+(1-p)^{3}h_{3}+(1-p)^{4}h_{4},
\end{align*}
and
\begin{align*}
h_{1}
&=(\ka^2-\ke^2)^{-1}\ka \ke \tau  (\ka (-e^{-\ke \tau })+\ke (e^{-\ka \tau }-1)+\ka),\\
h_{2}
&=h_{2}^{\textup{num}}[2 (\ka^2-\ke^2) \left(e^{\ka \tau }-1\right) \left(e^{\ke \tau }-1\right)]^{-1},\\
h_{3}
&=(\ka^2-\ke^2)^{-1}\ka \ke \tau  \left(\ka e^{-\ke \tau }-\ke e^{-\ka \tau }\right),\\
h_{4}
&=\frac{\ka \ke \tau  e^{-(\tau  (\ka+\ke))} \left(\ka e^{2 \ka \tau }-\ka e^{\ka \tau }+\ke e^{\ke \tau }-\ke e^{2 \ke \tau }\right)}{(\ka^2-\ke^2) \left(e^{\ka \tau }-1\right) \left(e^{\ke \tau }-1\right)},
\end{align*}
and
\begin{align*}
h_{2}^{\textup{num}}
&=e^{-(\tau  (\ka+\ke))} \big[2 \ka^2 \ke \tau  e^{\ka \tau }-2 \ka^2 \ke \tau  e^{2 \ka \tau }-2 \ka \ke^2 \tau  e^{\ke \tau }+2 \ka \ke^2 \tau  e^{2 \ke \tau }\\
&\quad+e^{\tau  (\ka+2 \ke)} \left(\ka (\ke \tau  (3 \ka-\ke)-4 \ka)+4 \ke^2\right)\\
&\quad+e^{\tau  (2 \ka+\ke)} \left(\ka (\ke \tau  (\ka-3 \ke)-4 \ka)+4 \ke^2\right)\\
&\quad-(\ka-\ke) e^{\tau  (\ka+\ke)} (\ka \ke \tau -4 (\ka+\ke))\\
&\quad-(\ka-\ke) e^{2 \tau  (\ka+\ke)} (3 \ka \ke \tau -4 (\ka+\ke))\big].
\end{align*}
\end{corollary}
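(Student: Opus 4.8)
The plan is to assemble every formula in Corollary~\ref{cm} from ingredients that are already in hand: the ensemble-average representations \eqref{muensemble} and \eqref{epsensemble} of $\mu$ and $\eps$ (consequences of Theorem~\ref{lt}), the moment formulas \eqref{ob1} and \eqref{ob2} of Theorem~\ref{thm12}, and the explicit moments of $A$ and $B$ in Corollary~\ref{abf}. Under the iid assumption of Corollary~\ref{cm}, $P$ and $\pi$ are given by \eqref{Peasy} and \eqref{pi}, so Corollary~\ref{abf} applies directly, and what remains is a short list of elementary integrals over $t\in[0,\tau]$ together with algebraic simplification.

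For the mean, I would start from $\mu=\frac{1}{\C\tau}\int_{0}^{\tau}\E[C(t)]\,\dd t$ and substitute \eqref{ob1}. The only integrals needed are $\int_{0}^{\tau}\alpha^{t/\tau}\,\dd t=(1-\alpha)/\ke$ and $\int_{0}^{\tau}\beta^{t/\tau}\,\dd t=(1-\beta)/\ka$. Combining these with $\kappa/\C=\ka\ke\tau/(\ka-\ke)$ (from \eqref{concscale} and \eqref{cp}) and with $\E[A^{\single}]=p/(1-\alpha)$, $\E[B^{\single}]=p/(1-\beta)$ (and the corresponding double-dose values from Corollary~\ref{abf}), the factors $(1-\alpha)$, $(1-\beta)$, $\ke$, $\ka$ all cancel, leaving $\mu^{\single}=p$ and $\mu^{\double}=p+p(1-p)$. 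The strict inequality for $p\in(0,1)$ is then immediate since $p(1-p)>0$.

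For the error, note that \eqref{epsensemble} together with $\frac{1}{\C\tau}\int_{0}^{\tau}\E[C(t)]\,\dd t=\mu$ gives $\eps^{2}=\frac{1}{\C^{2}\tau}\int_{0}^{\tau}\E[C^{2}(t)]\,\dd t-2\mu+1$. Substituting \eqref{ob2} reduces the integral to $\int_{0}^{\tau}\alpha^{2t/\tau}\,\dd t=(1-\alpha^{2})/(2\ke)$, $\int_{0}^{\tau}\alpha^{t/\tau}\beta^{t/\tau}\,\dd t=(1-\alpha\beta)/(\ka+\ke)$, and $\int_{0}^{\tau}\beta^{2t/\tau}\,\dd t=(1-\beta^{2})/(2\ka)$, multiplied respectively by $\E[A^{2}]$, $-2\E[AB]$, $\E[B^{2}]$ and by the prefactor $\kappa^{2}/(\C^{2}\tau)=\ka^{2}\ke^{2}\tau/(\ka-\ke)^{2}$. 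I would then insert the polynomial-in-$p$ expressions for the second moments $\E[(A^{\single})^{2}]$, $\E[A^{\single}B^{\single}]$, $\E[(B^{\single})^{2}]$ (and their double-dose counterparts) from Corollary~\ref{abf}. Because those moments are degree-$2$ (single) or degree-$4$ (double) polynomials in $p$, and $\mu$ is degree $1$ (resp.\ $2$), $\eps^{2}$ is a polynomial in $p$ of degree $2$ (resp.\ $4$); moreover at $p=1$ both protocols reduce to perfect adherence, so $\eps^{2}\big|_{p=1}=(\eps^{\perf})^{2}$. Hence $\eps^{2}-(\eps^{\perf})^{2}$ is a polynomial in $p$ vanishing at $p=1$, which is exactly why it can be written as $\eta=(1-p)h_{1}+(1-p)^{2}h_{2}+\cdots$ with no constant term; matching the coefficients of powers of $(1-p)$ yields $\eta^{\single}$ and $h_{1},\dots,h_{4}$, while the $p=1$ term gives \eqref{ef000} after rewriting $(1+\alpha)/(1-\alpha)$ and $(1+\beta)/(1-\beta)$ as $\coth(\ke\tau/2)$ and $\coth(\ka\tau/2)$. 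A final substitution $\alpha=e^{-\ke\tau}$, $\beta=e^{-\ka\tau}$ expresses everything in terms of $\ke$, $\ka$, $\tau$.

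The main obstacle is not conceptual, since every step is a finite computation licensed by the cited theorems; rather it is the sheer bulk of the symbolic algebra in the double-dose case: $\E[A^{\double}B^{\double}]$ in \eqref{dm} is a quartic in $p$ with messy $\alpha,\beta$-dependent coefficients, and after integration, rationalization, and conversion to $\ke,\ka,\tau$ the coefficient $h_{2}$ acquires the lengthy numerator $h_{2}^{\textup{num}}$; carrying out all cancellations of the factors $(\ka^{2}-\ke^{2})$, $e^{\ka\tau}-1$, and $e^{\ke\tau}-1$ correctly is the part that calls for symbolic algebra software, as the statement notes. Natural consistency checks along the way are the $\alpha\leftrightarrow\beta$ (equivalently $\ka\leftrightarrow\ke$) symmetry noted in section~\ref{stats}, the reduction to $\eps^{\perf}$ at $p=1$, and the limit $\ka\to\infty$ (i.e.\ $\beta\to0$), which should recover the corresponding formulas of \cite{counterman2021}.
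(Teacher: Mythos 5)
Your proposal is correct and follows the same route the paper uses: substitute Corollary~\ref{abf} into the moment formulas \eqref{ob1}--\eqref{ob2}, integrate over $t\in[0,\tau]$ via \eqref{muensemble}--\eqref{epsensemble}, and simplify symbolically, with the $(1-p)$-expansion explained by the reduction to perfect adherence at $p=1$. The elementary integrals, the prefactors $\kappa/\C$ and $\kappa^{2}/(\C^{2}\tau)$, and the $\coth$ identification all check out, so nothing is missing.
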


In Figure~\ref{fignum}, we plot the relative mean $\mu$ and the error $\eps$ for the single and double dose protocols. The curves are the formulas in Corollary~\ref{cm} and the square markers are computed from stochastic simulations of $c(t)$ using the definitions in \eqref{mud} and \eqref{decomp} with $T=10^{6}$ and $\tau$, $D$, $F$, and $V$ set to unity. In the left plot, we set $\alpha=0.5$ and $\beta=0.25$ and $p$ varies. In the right plot, we take $\alpha=0.5$, $p=0.8$, and $\beta$ varies. This figure shows that the simulations agree with the exact analytical results obtained in Corollary~\ref{cm}.

    %%%%%%%%%%%%%%%%%%%%%
\begin{figure}[t]%[htp]
\centering
\includegraphics[width=.9\linewidth]{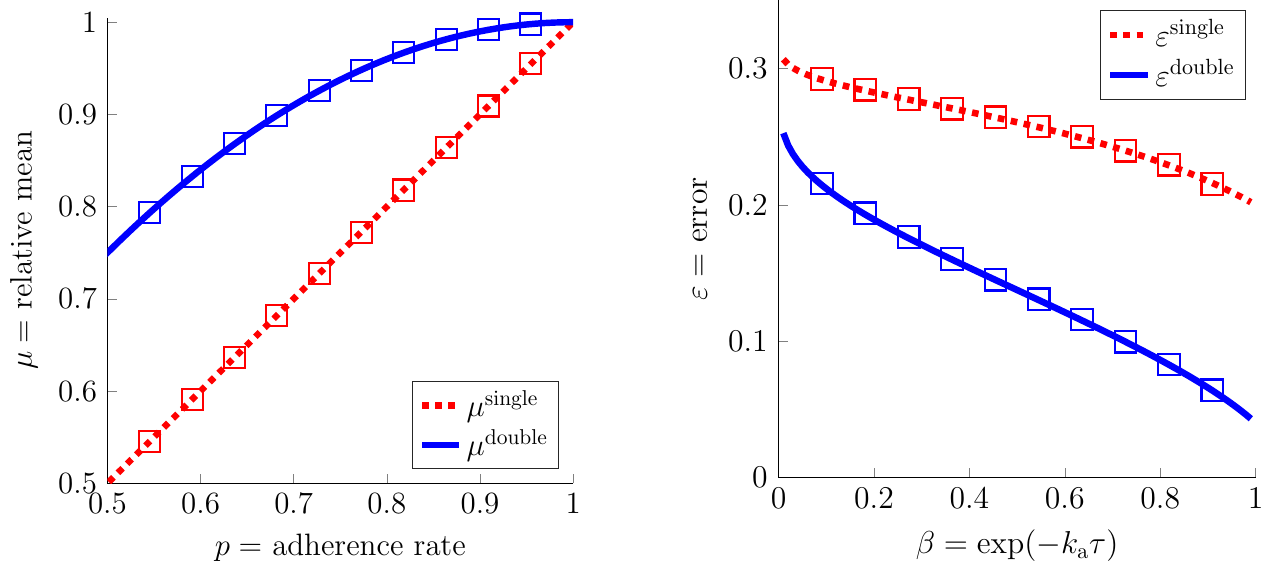}
\caption{For both the single and double dose protocols, the left panel plots the average drug concentration compared to $\C$ ($\mu$ in \eqref{mud}) and the right panel plots the error ($\eps$ in \eqref{decomp}). In both plots, the curves are the formulas in Corollary~\ref{cm} and the square markers are computed from stochastic simulations of the drug concentration. See the text for details.}
\label{fignum}
\end{figure}
%%%%%%%%%%%%%%%%%%%%%

%%%%%%%%%%%%%%%%%%%%%%%%%%%%%%%%%%%%%%%%%%%%%%%%%%%%%%%%%%%%%%%%%%%%%%%%%%%%%%%%%%%%%%%%%%%%%%%%%%%%%%%%%%%%
\subsection{Maximum concentration}

The following theorem concerns the largest possible value of the drug concentration for the single and double dose protocols. {\blue Here, $\sup_{t\in[0,\tau],N\ge0,\xi}$ denotes the supremum over all $t\in[0,\tau]$, over all integers $N\ge0$, and over all possible patterns of patient remembering or forgetting, $\xi=\{\xi_{n}\}_{n\in\Z}\in\{0,1\}^{\Z}$.}

\begin{theorem}\label{largest}
{\blue For the single dose protocol, we have that 
\begin{align}\label{ob3}
\begin{split}
\sup_{t\in[0,\tau],N\ge0}c^{\perf}(N\tau+t)
&=\sup_{t\in[0,\tau],N\ge0,\xi}c^{\single}(N\tau+t)\\
&=\sup_{t\in[0,\tau],\xi}C^{\single}(t)
=\sup_{t\in[0,\tau]}C^{\perf}(t)
=C^{\perf}(t^{*}),
\end{split}
\end{align}
where $C^{\perf}(t)=\kappa(\alpha^{t/\tau}/(1-\alpha)-\beta^{t/\tau}/(1-\beta))$ and
\begin{align}\label{ob4}
t^{*}
:={\frac{\tau\ln \Big(\frac{(1-\alpha) \ln (\beta )}{(1-\beta ) \ln (\alpha )}\Big)}{\ln (\alpha /\beta )}}.
\end{align}
Therefore, the maximum relative overshoot in \eqref{theta} for the single dose protocol is
\begin{align}\label{tst}
\theta^{\single}
=\theta^{\perf}
=\frac{\ka \ke \tau}{\ka-\ke}  \Bigg(\frac{\left(\frac{\ke (e^{\ka \tau }-1)}{\ka (e^{\ke \tau }-1)}\right)^{\frac{\ke}{\ka-\ke}}}{e^{\ke \tau }-1}-\frac{\left(\frac{\ke (e^{\ka \tau }-1)}{\ka (e^{\ke \tau }-1)}\right)^{\frac{\ka}{\ka-\ke}}}{e^{\ka \tau }-1}\Bigg)-1>0.
\end{align}

For the double dose protocol, we have the upper bound
\begin{align}\label{nob3new}
\begin{split}
\sup_{t\in[0,\tau],N\ge0,\xi}c^{\double}(N\tau+t)
=
\sup_{t\in[0,\tau],\xi}C^{\double}(t)
&\le C^{\perf}(t^{*})+b(s^{*}),
\end{split}
\end{align}
where $C^{\perf}(t^{*})$ is given above and $b(s^{*})$ is the maximum concentration obtained after a single dose. Specifically, $b(s)=\kappa(\alpha^{s/\tau}-\beta^{s/\tau})$ and
\begin{align}\label{sstar}
s^{*}
=\frac{\tau\ln(\ln(\beta)/\ln(\alpha))}{\ln(\alpha/\beta)}
=\frac{\ln(\ka/\ke)}{\ka-\ke}>0.
\end{align}
Therefore, using that $b(s^{*})/\C=\tau\ka^{\frac{\ke}{\ke-\ka}}\ke^{\frac{\ka}{\ka-\ke}}$, the maximum relative overshoot in \eqref{theta} for the double dose protocol, $\theta^{\double}$, has the upper bound
\begin{align}\label{tdt}
\theta^{\double}
\le
\theta^{\perf}+\tau\ka^{\frac{\ke}{\ke-\ka}}\ke^{\frac{\ka}{\ka-\ke}}.
\end{align}}
\end{theorem}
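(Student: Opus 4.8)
The plan is to reduce every supremum in the statement to one involving the limiting profile $C(t)$ from Theorem~\ref{lt}. Write
\[
b(s):=\kappa\big(e^{-\ke s}-e^{-\ka s}\big)=\kappa\big(\alpha^{s/\tau}-\beta^{s/\tau}\big),
\]
so that, pathwise, $c(N\tau+t)=\sum_{k=0}^{N}b(t+k\tau)f(X_{N-k})$ and $C(t)=\sum_{n\ge0}b(t+n\tau)f(X_{-n})$. I would first record three elementary facts: (i) $b(s)\ge0$ for all $s\ge0$ (the sign of $\kappa$ matches that of $\ka-\ke$), with $b(0)=0$ and $b(s)\to0$ as $s\to\infty$; (ii) $b$ has the unique critical point $s^{*}=\ln(\ka/\ke)/(\ka-\ke)>0$, is strictly increasing on $[0,s^{*}]$ and strictly decreasing on $[s^{*},\infty)$, and $s^{*}$ agrees with \eqref{sstar} because $\ln\beta/\ln\alpha=\ka/\ke$ and $\ln(\alpha/\beta)=(\ka-\ke)\tau$; (iii) an elementary simplification gives $b(s^{*})/\C=\tau\,\ka^{\ke/(\ke-\ka)}\ke^{\ka/(\ka-\ke)}$. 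Since the partial sums $\sum_{k=0}^{N}b(t+k\tau)f(X_{N-k})$ are nondecreasing in $N$ (as $b\ge0$ and $f\ge0$), and $c(N\tau+t)$ for a pattern $\xi$ is exactly the $N$-th partial sum of $C(t)$ for the $N$-shifted pattern, taking suprema over $N$ and over all patterns converts each ``$\sup_{t,N,\xi}c$'' in the statement into the corresponding ``$\sup_{t,\xi}C$'', and likewise $\sup_{t,N}c^{\perf}(N\tau+t)=\sup_{t}C^{\perf}(t)$.

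For the single dose protocol, $f^{\single}(X_{-n})=\xi_{-n}\in\{0,1\}$, so $C^{\single}(t)=\sum_{n\ge0}\xi_{-n}b(t+n\tau)\le C^{\perf}(t)$, with equality when $\xi\equiv1$; this yields every equality in \eqref{ob3} except the last. For the last one I would maximize $C^{\perf}(t)=\kappa\big(\alpha^{t/\tau}/(1-\alpha)-\beta^{t/\tau}/(1-\beta)\big)$ over $[0,\tau]$. By the $\alpha\leftrightarrow\beta$ symmetry of the statement we may assume $\ka>\ke$; then $(C^{\perf})'(0)>0$ (from strict monotonicity of $x\mapsto x/(1-e^{-x})$) while $(C^{\perf})'(t)<0$ for large $t$, and $(C^{\perf})'(t)=0$ has the unique root $t^{*}$ in \eqref{ob4} because $(\alpha/\beta)^{t/\tau}$ is strictly monotone; combined with $C^{\perf}(\tau)=C^{\perf}(0)$ (as $b(0)=0$), this forces $t^{*}\in(0,\tau)$ and identifies it as the global maximizer. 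Substituting $t^{*}$ and using $\kappa/\C=\ka\ke\tau/(\ka-\ke)$ produces \eqref{tst}; positivity $\theta^{\perf}>0$ holds because $C^{\perf}$ is non-constant (as $\ka\neq\ke$) while $\C$ equals its average over one period, so its maximum strictly exceeds $\C$.

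For the double dose protocol, $f^{\double}(X_{-n})=\xi_{-n}(2-\xi_{-n-1})$, and I would compare $C^{\double}$ to $C^{\perf}$ by pairing each missed scheduled dose with its make-up. One checks that
\[
C^{\double}(t)-C^{\perf}(t)=\sum_{n\in\mathcal M}\big(b(t+n\tau)-b(t+(n+1)\tau)\big)-\sum_{m\in\mathcal U}b(t+m\tau),
\]
where $\mathcal M=\{n\ge0:\xi_{-n}=1,\ \xi_{-n-1}=0\}$ indexes make-up doses (the one at time $-n$ standing in for the missed dose at time $-(n+1)$) and $\mathcal U$ indexes missed doses that are never made up; the second sum is nonnegative and may be dropped. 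The intervals $[t+n\tau,t+(n+1)\tau]$, $n\in\mathcal M$, are distinct cells of the uniform partition of $[t,\infty)$ and so have disjoint interiors; since $b$ is unimodal with $\max b=b(s^{*})$ and $b\ge0$, the positive parts $\big(b(t+n\tau)-b(t+(n+1)\tau)\big)_{+}$ arise only from cells meeting the decreasing branch $[s^{*},\infty)$, on which the total decrease of $b$ is $b(s^{*})$, so these positive parts sum to at most $b(s^{*})$. Hence $C^{\double}(t)\le C^{\perf}(t)+b(s^{*})\le C^{\perf}(t^{*})+b(s^{*})$, which is \eqref{nob3new}; dividing by $\C$ and invoking fact (iii) gives \eqref{tdt}.

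I expect the double-dose estimate to be the crux: the naive bound $C^{\single}\le C^{\perf}$ is far too lossy here (for the alternating pattern $\cdots010101$ the raw make-up sum $\sum_{n\in\mathcal M}b(t+n\tau)$ is already comparable to $C^{\perf}$ itself), so one really needs the cancellation exposed by the pairing, together with the observation that the positive part of a telescoping-type sum of $b$ over disjoint unit-length cells cannot exceed a single peak value $b(s^{*})$. Everything else—the first-derivative sign analysis pinning down $t^{*}$, and the closed-form simplifications producing \eqref{tst} and \eqref{tdt}, which as elsewhere in the paper can be carried out with symbolic algebra—is routine.
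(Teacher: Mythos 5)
Your proposal is correct, and for the double-dose bound it takes a genuinely different route from the paper. On the single-dose side and the identification of the various suprema, you follow essentially the paper's logic (missing doses only lowers a nonnegative sum, $\xi\equiv1$ attains equality, then a calculus exercise for $t^{*}$), except that you replace the paper's probabilistic step---convergence in distribution from Theorem~\ref{lt} together with the equivalence of supremum and essential supremum over patterns---by the deterministic observation that $c(N\tau+t)$ for a given pattern equals the partial sum $C_{-N,0}(t)$ for the shifted pattern, and these partial sums increase to $C(t)$; this is more elementary and equally valid since the supremum is over all patterns. (One cosmetic point: ``the partial sums are nondecreasing in $N$'' is only correct after the shift identification you make in the same sentence, since for a fixed one-sided pattern the summands are re-indexed as $N$ grows; state the monotonicity for $C_{-N,0}(t)$ directly.) The real divergence is the double-dose estimate. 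The paper restricts attention to maximizing sequences $\{g_{n}\}$, proves structural facts about them ($g_{n}=1$ before the peak index $n^{*}$ of the coefficients $K_{n}(t)$, and $g_{n}=0$ only immediately after a $2$), and then runs a summation-by-parts argument with the nonnegative partial sums $S_{n}$ to reach $C^{\double}(t)\le K_{n'}(t)+C^{\perf}(t)$. You instead prove the pathwise inequality $C^{\double}(t)\le C^{\perf}(t)+b(s^{*})$ for \emph{every} admissible sequence, by pairing each make-up dose with the miss it compensates, discarding the unpaired misses, and bounding the resulting sum of positive telescoping differences $\big(b(t+n\tau)-b(t+(n+1)\tau)\big)_{+}$ over disjoint cells by the total decrease of the unimodal $b$ on $[s^{*},\infty)$, which is $b(s^{*})$. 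Both arguments yield the same final bound; yours avoids the case analysis on maximizers and makes the ``at most one extra peak'' interpretation transparent, while the paper's version additionally characterizes what a maximizing dosing pattern must look like.
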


%%%%%%%%%%%%%%%%%%%%%%%%%%%%%%%%%%%%%%%%%%%%%%%%%%%%%%%%%%%%%%%%%%%%%%%%%%%%%%%%%%%%%%%%%%%%%%%%%%%%%%%%%%%%%%%%%%%%%%%%%%%%%%%%%%%%%%%%%%%%
\section{Some pharmacological implications}\label{pharm}

In this section, we explore some pharmacological implications of the mathematical model and analysis in sections~\ref{model}-\ref{math}. We compare drug regimens using the statistics $\mu$, $\eps$, and $\theta$ defined in \eqref{mud}-\eqref{theta}. Recall that $\ka$ and $\ke$ denote the absorption and elimination rates, and $\tau$ denotes the dosing interval. We present some results in terms of the dimensionless parameters,
\begin{align*}
\alpha
:=e^{-\ke\tau}\in(0,1),\quad
\beta
:=e^{-\ka\tau}\in(0,1).
\end{align*}
The adherence rate is $p\in(0,1]$, and for simplicity, we consider the model of nonadherence of section~\ref{simple} in which the patient remembers or forgets to take their medication independently at different dosing times. However, we note that the actual doses taken by the patient are not independent if they follow the double dose protocol, since in this case they take a double dose if they remember and they happened to have forgotten at the prior dosing time.

%%%%%%%%%%%%%%%%%%%%%%%%%%%%%%%%%%%%%%%%%%%%%%%%%%%%%%%%%%%%%%%%%%%%%%
\subsection{Perfect adherence}

To better understand the effects of nonadherence, we first consider the case of perfect adherence (i.e.\ $p=1$). We have that $\mu^{\perf}=1$ and Corollary~\ref{cm} yields that the error $\eps$ in \eqref{decomp} is 
\begin{align}\label{ef}
\eps^{\perf}
&=\sqrt{\frac{\ka \ke \tau  \left(\ka \coth \left(\frac{\ke \tau }{2}\right)-\ke \coth \left(\frac{\ka \tau }{2}\right)\right)}{2 \left(\ka^2-\ke^2\right)}-1},
\end{align}
where $\coth(z):=(e^{z}+e^{-z})/(e^{z}-e^{-z})$ denotes the hyperbolic cotangent. By differentiating \eqref{ef}, it is straightforward to check that $\eps^{\perf}$ is an increasing function of $\ke$, $\ka$, and $\tau$,
\begin{align*}
\frac{\partial}{\partial\ke}\eps^{\perf}>0,\quad
\frac{\partial}{\partial\ka}\eps^{\perf}>0,\quad
\frac{\partial}{\partial\tau}\eps^{\perf}>0.
\end{align*}
Furthermore, \eqref{ef} implies that $\eps^{\perf}$ has the limiting values,
\begin{align*}
\lim_{\alpha\to0,\beta\to0}\eps^{\perf}
=\infty,\quad
\lim_{\alpha\to1}\eps^{\perf}
=\lim_{\beta\to1}\eps^{\perf}
&=0.
\end{align*}
Finally, {\blue Theorem~\ref{largest} yields that} the maximum fluctuation above $\C$ for a perfectly adherent patient is
\begin{align}\label{tp}
\theta^{\perf}
=\frac{\ka \ke \tau}{\ka-\ke}  \Bigg(\frac{\left(\frac{\ke (e^{\ka \tau }-1)}{\ka (e^{\ke \tau }-1)}\right)^{\frac{\ke}{\ka-\ke}}}{e^{\ke \tau }-1}-\frac{\left(\frac{\ke (e^{\ka \tau }-1)}{\ka (e^{\ke \tau }-1)}\right)^{\frac{\ka}{\ka-\ke}}}{e^{\ka \tau }-1}\Bigg)-1>0,
\end{align}
which, like the {error} $\eps^{\perf}$, is an increasing function of $\ke$, $\ka$, and $\tau$.

%%%%%%%%%%%%%%%%%%%%%
\begin{figure}[t]%[htp]
\centering
\includegraphics[width=1\linewidth]{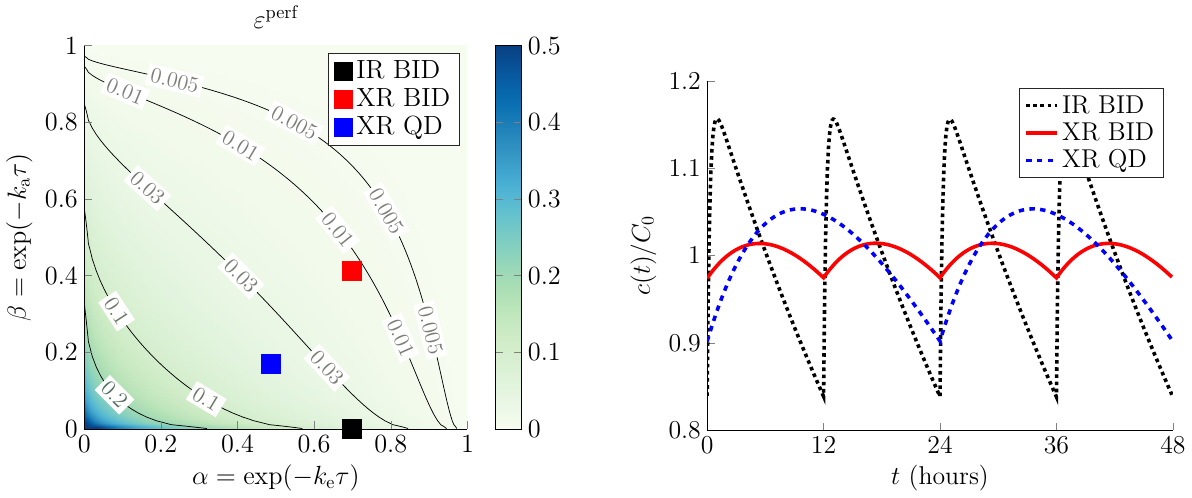}
\caption{The left panel is a contour plot of the {error} for perfect adherence, $\eps^{\perf}$ in \eqref{ef}, as a function of $\alpha$ and $\beta$. The three square markers correspond to the antiepileptic drug lamotrigine in its IR version dosed twice-daily (BID) and its XR version dosed either BID or once-daily (QD). The right panel plots drug concentration time courses corresponding to these three drug regimens. See the text for details.}
\label{figperf}
\end{figure}
%%%%%%%%%%%%%%%%%%%%%

To setup some analysis below, we consider these results in the context of IR and XR drugs. Recall that IR and XR formulations of a drug have identical elimination rates, but the XR formulation has a much slower absorption rate and is sometimes prescribed with a larger dosing interval. In the left panel of Figure~\ref{figperf}, we show a contour plot of $\eps^{\perf}$ as a function of $\alpha$ and $\beta$. In this plot, the three square markers correspond to a twice-daily (BID) IR antiepileptic drug and its XR formulation dosed either BID or once-daily (QD). The specific drug is lamotrigine with elimination rate $\ke=0.03\,\textup{hr}^{-1}$ and the IR and XR absorption rates are $\ka^{\textup{IR}}=3\,\textup{hr}^{-1}\gg\ka^{\textup{XR}}=0.07\,\textup{hr}^{-1}$ {\blue (parameter values were obtained in \cite{chen2013})}. The values of $\eps^{\perf}$ for the three square markers are
\begin{align}\label{ai}
\eps^{\perf}_{\textup{IR,BID}}
\approx0.1,\quad
\eps^{\perf}_{\textup{XR,BID}}
\approx0.01,\quad
\eps^{\perf}_{\textup{XR,QD}}
\approx0.05.
\end{align}
Hence, \eqref{ai} implies that the {error}s from $\C$ are largest for IR with BID dosing and smallest for XR with BID dosing. This is illustrated in the right panel of Figure~\ref{figperf}, which plots concentration time courses for these three dosing regimens.

%%%%%%%%%%%%%%%%%%%%%%%%%%%%%%%%%%%%%%%%%%%%%%%%%%%%%%%%%%%%%%%%%%%%%%
\subsection{Average drug concentration}

We now analyze the effects of nonadherence, beginning with the average drug concentration. Corollary~\ref{cm} yields that $\mu$ in \eqref{mud} for the single and double dose protocols is
\begin{align*}
\mu^{\single}=p
<\mu^{\double}=p+p(1-p),\quad p\in(0,1).
\end{align*}
Therefore, the average drug concentrations in a patient following the single or double dose protocols are
\begin{align}
\begin{split}\label{mm4}
\lim_{T\to\infty}\frac{1}{T}\int_{0}^{T}c^{\single}(t)\,\dd t
&
=p\C
=p\frac{DF}{V}\frac{1}{\ke\tau},\\
\lim_{T\to\infty}\frac{1}{T}\int_{0}^{T}c^{\double}(t)\,\dd t
&
=\big[p+p(1-p)\big]\C
=\big[p+p(1-p)\big]\frac{DF}{V}\frac{1}{\ke\tau}.
\end{split}
\end{align}
There are two implications of \eqref{mm4} which we emphasize. First, \eqref{mm4} ensures that the long time average drug concentration is independent of the absorption rate $\ka$. Hence, for any given adherence $p\in(0,1]$, IR and XR formulations of the same drug result in the same average drug concentration in the patient. This fact will be useful in section~\ref{slower} below.

Second, \eqref{mm4} quantifies how the double dose protocol increases the average drug level compared to the single dose protocol. In the left panel of Figure~\ref{fignum}, we plot $\mu^{\single}$ and $\mu^{\double}$ as functions of the adherence $p\in(0,1)$, which shows that the increase in drug levels obtained from switching from the single to double dose protocol is substantial for common values of $p$. Indeed, if one is only concerned with the average drug level, then following the double dose protocol with adherence of only $p=0.8$ is equivalent to following the single dose protocol with adherence $p=0.8+0.8\times0.2=0.96$. In practical terms for a QD drug, a patient following the double dose protocol who tends to miss one dose a week ($p=6/7$) has roughly the same average drug levels as a patient following the single dose protocol who tends to miss only one dose a month ($6/7+6/7\times1/7\approx29/30$).

%%%%%%%%%%%%%%%%%%%%%%%%%%%%%%%%%%%%%%%%%%%%%%%%%%%%%%%%%%%%%%%%%%%%%%
\subsection{Slow absorption or elimination reduces the error}\label{slower}

%%%%%%%%%%%%%%%%%%%%%
\begin{figure}[t]%[htp]
\centering
\includegraphics[width=1\linewidth]{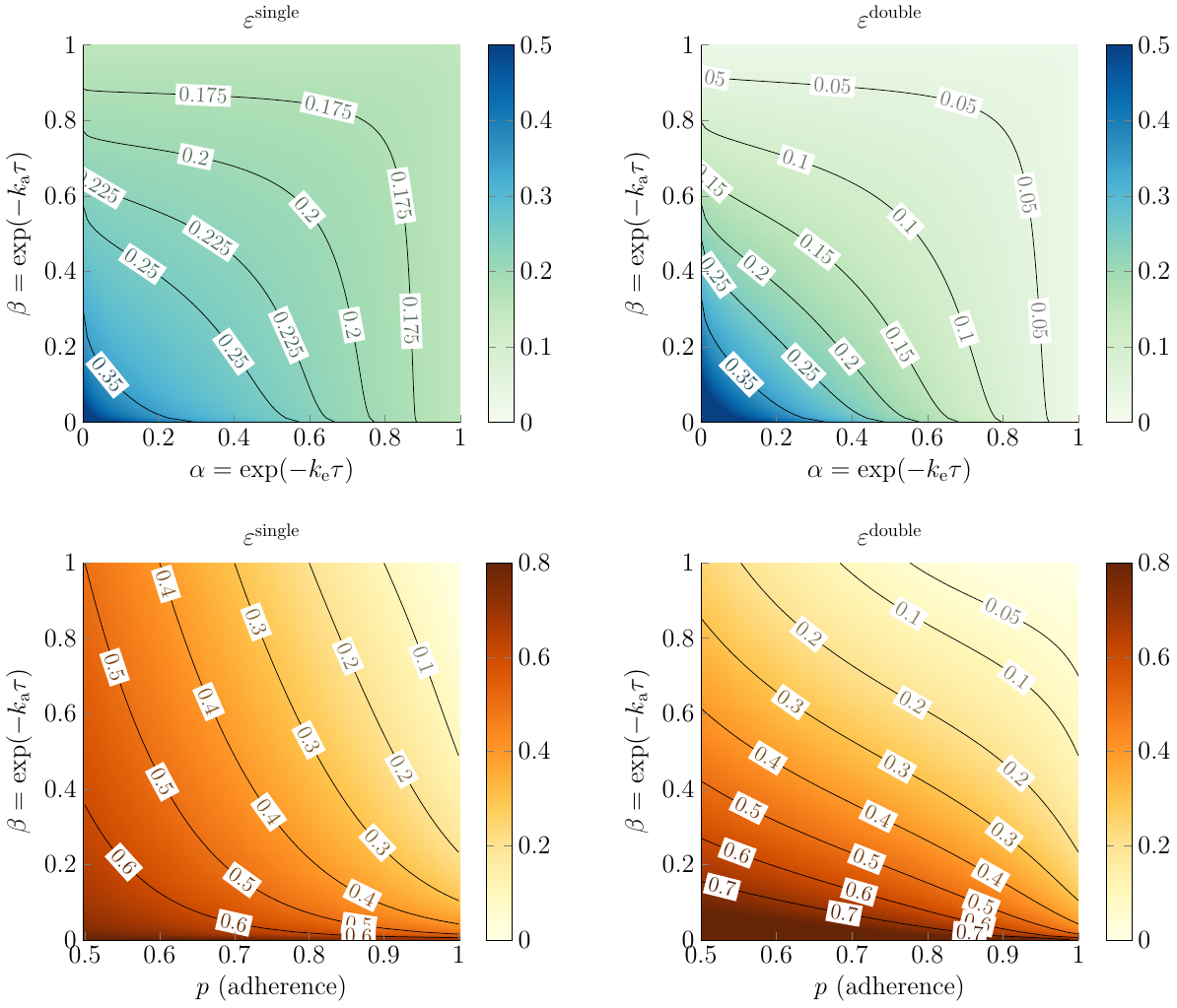}
\caption{{\blue Contour plots of the {error} $\eps$ in \eqref{decomp}, using the formulas in Corollary~\ref{cm}. In the top two panels, $\alpha=e^{-\ke\tau}$ and $\beta=e^{-\ka\tau}$ vary and $p=0.85$. In the bottom two panels, $p$ and $\beta$ vary and $\alpha=0.01$. Since $\eps$ is symmetric in $\alpha$ and $\beta$, the bottom two panels can also be interpreted as fixing $\beta=0.01$ and letting $\alpha$ vary along the vertical axis.
}}
\label{figdev}
\end{figure}
%%%%%%%%%%%%%%%%%%%%%

Corollary~\ref{cm} gives explicit formulas for the error $\eps$ in \eqref{decomp} for both the single and double dose protocols. 
In {\blue the top two panels of} Figure~\ref{figdev}, we use these formulas to produce contour plots of the error $\eps$ as a function of $\alpha$ and $\beta$ for the single and double protocols with $p=0.85$. These plots show that $\eps$ is a decreasing function of $\alpha$ and $\beta$. Indeed, using the formulas in Corollary~\ref{cm}, we have verified with extensive numerical tests that for both the single and double dose protocols, $\frac{\partial}{\partial\alpha}\eps<0$ and $\frac{\partial}{\partial\beta}\eps<0$ for any choice of $\alpha,\beta,p\in(0,1)$ with $\alpha\neq\beta$. Since $\alpha=\exp(-\ke\tau)$ and $\beta=\exp(-\ka\tau)$, this implies that for both the single and double dose protocols, 
\begin{align}\label{signs}
\frac{\partial}{\partial\ke}\eps>0,\quad
\frac{\partial}{\partial\ka}\eps>0,\quad
\frac{\partial}{\partial\tau}\eps>0.
\end{align}
Hence, we can decrease the error for any adherence rate $p$ by decreasing the absorption rate $\ka$, the elimination rate $\ke$, or the dosing interval $\tau$.

{\blue
In the bottom two panels of Figure~\ref{figdev}, we plot the errors $\eps^{\single}$ and $\eps^{\double}$ as functions of $p$ and $\beta$ with $\alpha=0.01$ fixed (since $\eps$ is a symmetric function of $\alpha$ and $\beta$, this is equivalent to fixing $\beta=0.01$ and varying $\alpha$). These plots show that increasing $\beta$ (or $\alpha$) effectively increases the adherence, in the sense that a small $p$ and a large $\beta$ can yield the same error as a large $p$ and a small $\beta$. For example, the contour line at $\eps^{\single}=0.3$ in the bottom left panel shows that $p\approx0.75$ and $\beta\approx0.7$ yields the same error as $p\approx1$ and $\beta\approx0.1$. 
}
 
These results have implications for prescribing IR versus XR drug formulations, since \eqref{signs} implies that an XR drug yields smaller errors than an IR drug, as long as the two formulations have the same dosing interval. Furthermore, \eqref{mm4} ensures that the average drug concentration in the patient is independent of the absorption rate. Hence, replacing an IR drug with an XR drug with the same dosing interval (i) reduces fluctuations in the drug concentration and (ii) does not change the average drug concentration. In addition, \eqref{signs} implies that an XR drug with a longer dosing interval can increase or decrease the {error} compared to an IR drug with a shorter dosing interval, since ${{{\eps}}}$ is an increasing function of both $\ka$ and $\tau$.

%%%%%%%%%%%%%%%%%%%%%
\begin{figure}[t]%[htp]
\centering
\includegraphics[width=1\linewidth]{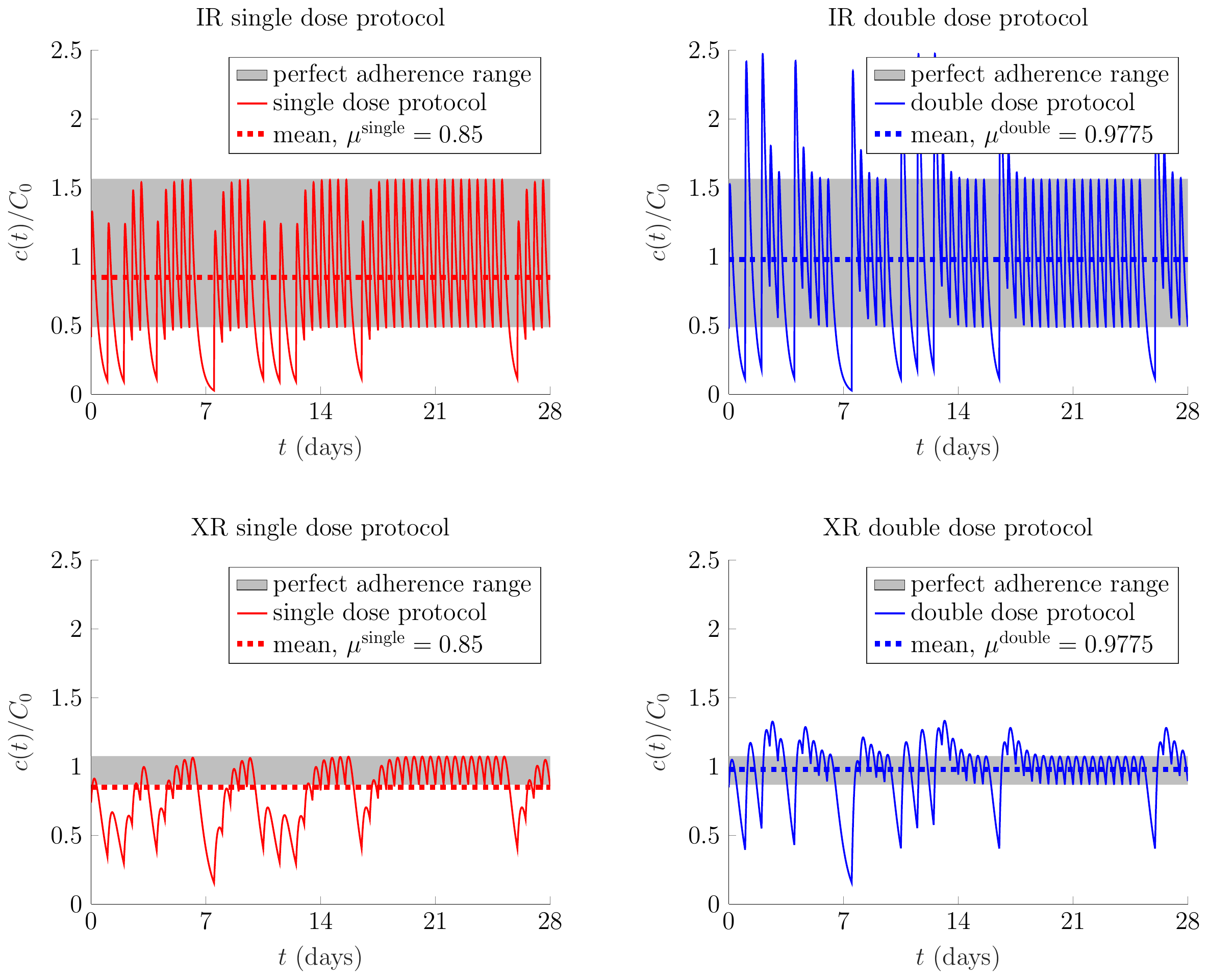}
\caption{Stochastic simulations of the drug concentration time course for 4 weeks of twice-daily dosing of Quetiapine fumarate under four regimens. The gray regions show the range between the peaks of troughs of the drug concentration under perfect adherence. The dashed flat lines describe the average drug concentration. The top two panels are for the IR formulation, and the bottom two panels are for the XR formulation. The left two panels are for the single dose protocol and the right two panels are for the double dose protocol.}
\label{figsim}
\end{figure}
%%%%%%%%%%%%%%%%%%%%%

To illustrate what the values of $\eps^{\single}$ and $\eps^{\double}$ imply about IR and XR formulations and the single and double dose protocol for an imperfectly adherent patient, in Figure~\ref{figsim} we plot stochastic simulations of the drug concentration time course under various scenarios of taking Quetiapine fumarate, which is an antipsychotic drug. The top two panels of Figure~\ref{figsim} are for the IR version and the bottom two panels are for the XR version. Further, the left two panels are for the single dose protocol and the right two panels are for the double dose protocol. We set $p=0.85$ and take $\tau=12\,\text{hr}$, $\ke=0.12\,\text{hr}^{-1}$, and $\ka^{\IR}=1.46\,\text{hr}^{-1}$ and $\ka^{\XR}=0.1\,\text{hr}^{-1}$ for the respective IR and XR formulations of Quetiapine fumarate {\blue (parameter values were obtained in \cite{elkomy2020})}. We note that the particular random sequence of missed doses is identical in all four panels. 

Figure~\ref{figsim} shows that the fluctuations are much greater for the IR formulation (top two panels) compared to the XR formulation (bottom two panels), which reflects the fact that the corresponding values of the error $\eps$ are
\begin{align}\label{dvals}
\eps^{\double}_{\textup{XR}}\approx0.21
<\eps^{\single}_{\textup{XR}}\approx0.26
<\eps^{\single}_{\textup{IR}}\approx0.44
<\eps^{\double}_{\textup{IR}}\approx0.50.
\end{align}
Furthermore, this plot shows that (a) for the IR formulation, the single dose protocol has smaller fluctuations than the double dose protocol, whereas (b) for the XR formulation, the double dose protocol has smaller fluctuations than the single dose protocol. Both of these points accord with the theoretical values in \eqref{dvals}, which were computed from the formulas in Corollary~\ref{cm}. In fact, for this drug, the {error} for the XR formulation with the double dose protocol and imperfect adherence ($p=0.85$), is less than the IR formulation with perfect adherence,
\begin{align*}
\eps^{\double}_{\textup{XR}}\approx0.21
<\eps^{\perf}_{\textup{IR}}\approx0.34,
\end{align*}
which is evident from Figure~\ref{figsim} by comparing the blue curve in the bottom right panel to the gray region in the top panels.

%%%%%%%%%%%%%%%%%%%%%%%%%%%%%%%%%%%%%%%%%%%%%%%%%%%%%%%%%%%%%%%%%%%%%%
\subsection{Double dose protocol mitigates nonadherence for drugs with slow absorption or elimination}\label{dvs}

When is the double dose protocol preferable to the single dose protocol? To compare the single and double dose {error}s, in Figure~\ref{figratio} we show contour plots of the relative difference,
\begin{align}\label{rd}
\frac{\eps^{\single}-\eps^{\double}}{\eps^{\single}},
\end{align}
as a function of $\alpha$ and $\beta$ for $p=0.9$ (top left panel) and $p=0.7$ (top right panel). The plots show that the double dose protocol has a smaller error than the single dose protocol except in the case that both $\alpha$ and $\beta$ are small. This is illustrated in Figure~\ref{figsim}, where the single dose protocol has smaller fluctuations for the IR formulation and the double dose protocol has smaller fluctuations for the XR formulation.

One possible concern regarding the double dose protocol is whether it might cause the drug concentration to rise far above the desired average $\C$. To address this question, we investigate $\theta$ in \eqref{theta}, which is the maximum possible relative increase above $\C$. For the single dose protocol, this maximum is achieved by a perfectly adherent patient, and thus $\theta^{\single}=\theta^{\perf}$ where $\theta^{\perf}$ is in \eqref{tp}. For the double dose protocol, {\blue we} are not able to find an explicit formula for $\theta^{\double}$, but {\blue Theorem~\ref{largest} gives the following upper bound,
\begin{align}\label{ublater}
\theta^{\double}
\le\theta^{\perf}+\tau\ka^{\frac{\ke}{\ke-\ka}}\ke^{\frac{\ka}{\ka-\ke}}.
\end{align}
I}n Figure~\ref{figratio}, we show contour plots of $\theta^{\single}$ in the bottom left panel and {\blue the upper bound for $\theta^{\double}$ in \eqref{ublater} in the bottom right panel.} These plots show that both $\theta^{\single}$ and $\theta^{\double}$ vanish for large values of $\alpha$ or $\beta$. Furthermore, $\theta^{\double}$ is relatively small as long as $\alpha$ and $\beta$ are not both small. Hence, $\theta^{\double}$ is small in the same parameter regime in which the double dose protocol has a smaller error $\eps$ than the single dose protocol.

%%%%%%%%%%%%%%%%%%%%%
\begin{figure}[t]%[htp]
\centering
\includegraphics[width=1\linewidth]{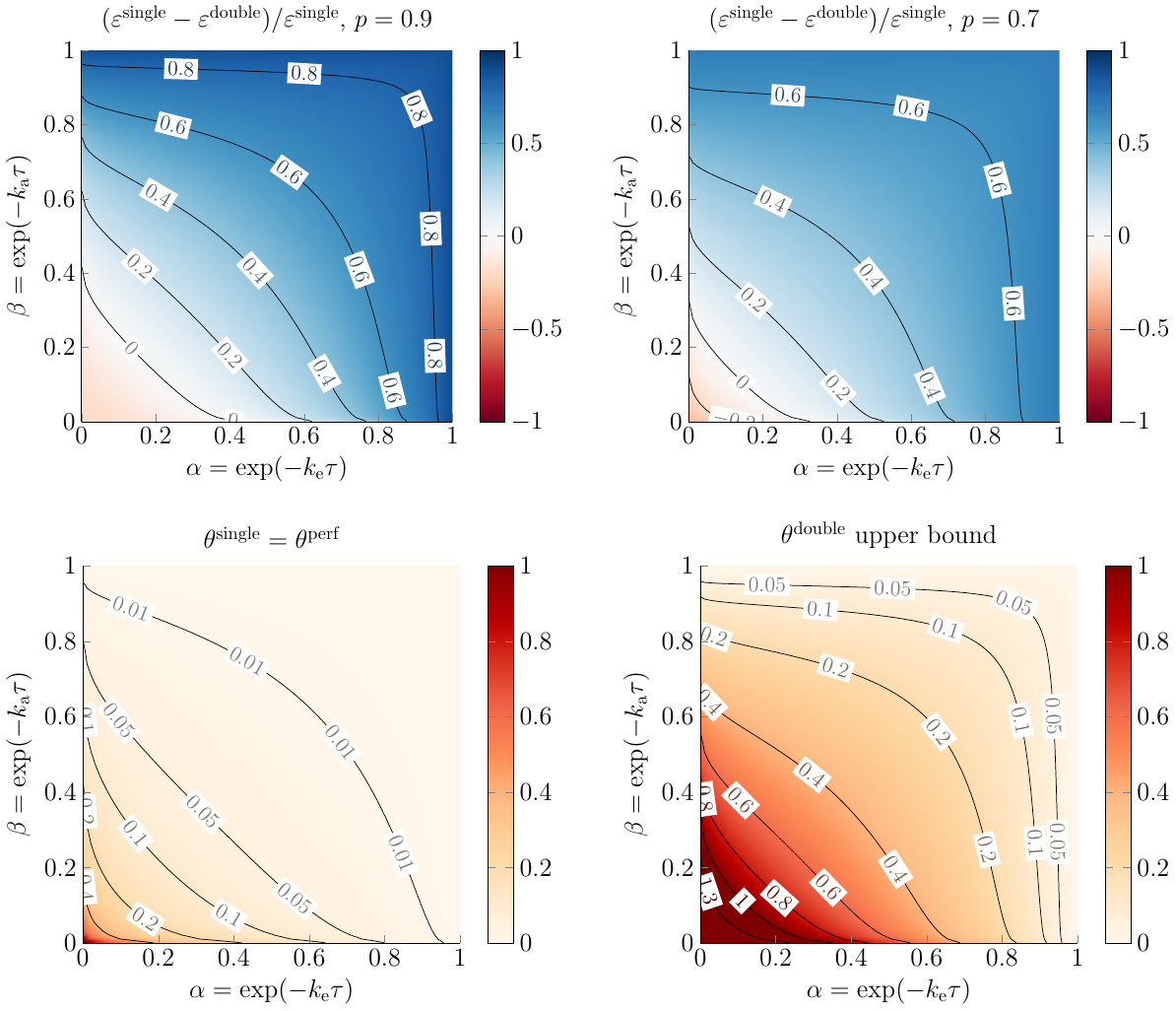}
\caption{The top two panels are contour plots of the relative difference in {error}s for the single and double dose {error}s in \eqref{rd} for $p=0.9$ (top left panel) and $p=0.7$ (top right panel). The bottom {\blue left panel is a contour plot of $\theta^{\single}$ (see \eqref{theta} and \eqref{tst} in Theorem~\ref{largest}) for the single dose protocol. The bottom right panel is a contour plot of the upper bound for $\theta^{\double}$ given in \eqref{tdt} in Theorem~\ref{largest} (and also in \eqref{ublater}).}}
\label{figratio}
\end{figure}
%%%%%%%%%%%%%%%%%%%%%

%%%%%%%%%%%%%%%%%%%%%%%%%%%%%%%%%%%%%%%%%%%%%%%%%%%%%%%%%%%%%%%%%%%%%%%%%%%%%%%%%%%%%%%%%%%%%%%%%%%%%%%%%%%%%%%%%%%%%%%%%%%%%%%%%%%%%%%%%%%%%%%%%%%%%%%%%%%%%%%%%%%%%%%%%%%%%%%%%%%%%%%%%%%%%%%%%%%%
\subsection{Comparing drug regimens and designing XR drugs}

As we have shown above, the {error} $\eps$ is an increasing function of both $\ka$ and $\tau$. Therefore, the {error} $\eps^{\XR}$ of an XR regimen with absorption rate $\ka^{\XR}$ and dosing interval $\tau^{\XR}$ may or may not be larger than the {error} $\eps^{\IR}$ of an IR regimen with absorption rate $\ka^{\IR}$ and $\tau^{\IR}$, depending on the particular parameter values. XR drugs are often made to simplify dosing, so suppose that the IR drug is dosed twice-daily and the XR drug is dosed once-daily, and thus
\begin{align}\label{dis}
\tau^{\IR}=12\,\text{hr},\quad \tau^{\XR}=24\,\text{hr}.
\end{align}
How slow does the XR absorption rate $\ka^{\XR}$ need to be so that the XR regimen has a smaller {error} than the IR regimen? This is a natural question both for the design of XR drugs and for comparing specific XR and IR regimens.

In the left panel of Figure~\ref{figkp}, we plot the value of $\ka^{\XR}$ needed so that ${{{\eps}}}^{\XR}={{{\eps}}}^{\IR}$ as a function of $\ka^{\IR}$. In this plot, the dosing intervals are as in \eqref{dis} and we take $\ke=0.03\,\text{hr}^{-1}$, which corresponds to the antiepileptic drug lamotrigine described above. The three curves are for perfect adherence (i.e.\ $p=1$) and the single and double dose protocols for adherence $p=0.85$. Notice that the curves describing imperfect adherence lie below the curve for perfect adherence. This means that if one takes into account imperfect adherence, then a slower XR absorption rate is needed so that the IR and XR regimens have the same {error}. The black circle marks where lamotrigine lies in the $(\ka^{\IR},\ka^{\XR})$ plane \cite{chen2013}.

The discussion above assumes that the IR regimen adherence $p^{\IR}$ is the same as the XR regimen adherence $p^{\XR}$ (i.e.\ $p^{\IR}=p^{\XR}=p$). However, the simplified dosing allowed by XR versions is often aimed at improving adherence. While our model cannot predict how adherence might increase by switching from an IR to an XR regimen, our model can predict the value of $p^{\XR}$ needed so that ${{{\eps}}}^{\XR}={{{\eps}}}^{\IR}$ for a given value of $p^{\IR}$. In the right panel of Figure~\ref{figkp}, we plot this value of $p^{\XR}$ as a function of $p^{\IR}$, where the dosing intervals are as in \eqref{dis} and the drugs are lamotrigine ($\ke=0.03\,\textup{hr}^{-1}$, $\ka^{\textup{IR}}=3\,\textup{hr}^{-1}$, $\ka^{\textup{XR}}=0.07\,\textup{hr}^{-1}$ \cite{chen2013}), a hypothetical antiepileptic drug (AED) considered in \cite{pellock2016} ($\ke=0.08\,\text{hr}^{-1}$, $\ka^{\IR}=5\,\text{hr}^{-1}$, $\ka^{\XR}=0.5\,\text{hr}^{-1}$), and Quetiapine fumarate ($\ke=0.12\,\text{hr}^{-1}$, $\ka^{\IR}=1.46\,\text{hr}^{-1}$, $\ka^{\XR}=0.1\,\text{hr}^{-1}$ \cite{elkomy2020}). This plot shows that, for lamotrigine, $p^{\XR}$ is only slightly larger than $p^{\IR}$ for most values of $p^{\IR}$, and $p^{\XR}$ is actually less than $p^{\IR}$ for large values of $p^{\IR}$. Further, $p^{\XR}$ is much higher than $p^{\IR}$ for the hypothetical antiepileptic drug and $p^{\XR}$ is much lower than $p^{\IR}$ for Quetiapine fumarate. Hence, depending on parameter values, switching from a BID IR regimen to a QD XR regimen may or may not require an increase in adherence to have the same {error}.

%%%%%%%%%%%%%%%%%%%%%
\begin{figure}[t]%[htp]
\centering
\includegraphics[width=1\linewidth]{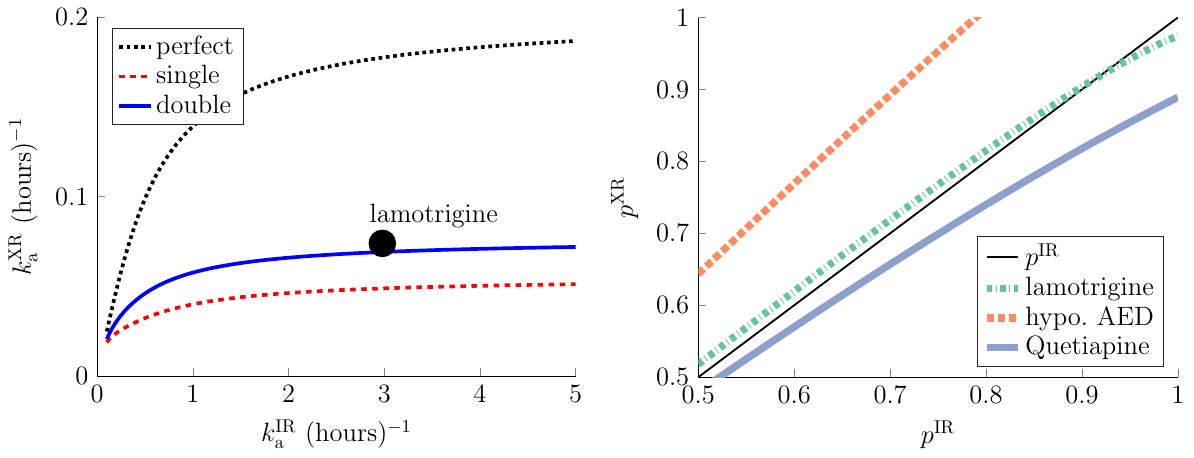}
\caption{The left panel plots the XR absorption rate needed so that a once-daily XR regimen has the same {error} as a twice-daily IR regimen, as a function of the IR absorption rate. For three different drugs, the right panel plots the once-daily XR adherence rate needed so that a once-daily XR regimen has the same {error} as a twice-daily IR regimen, as a function of the twice-daily IR adherence rate. See the text for details. }
\label{figkp}
\end{figure}
%%%%%%%%%%%%%%%%%%%%%

%%%%%%%%%%%%%%%%%%%%%%%%%%%%%%%%%%%%%%%%%%%%%%%%%%%%%%%%%%%%%%%%%%%%%%%%%%%%%%%%%%%%%%%%%%%%%%%%%%%%%%%%%%%%%%%%%%%%%%%%%%%%%%%%%%%%%%%%%%%%%%%%%%%%%%%%%%%%%%%%%%%%%%%%%%%%%%%%%%%%%%%%%%%%%%%%%%%%%%%%%%%%%%%%%%%%%%%%%%%%%%%%%%%%%%%%%%%%%%%%%%%%%%%%%%%%%%%%%%%%%%%%%%%%%%%%%%%%%%%%%%%%%%%%%%%%%%%%%%%%%%%%%%%%%%%%%%%%%%%%%%%%%%%%%%%%%%%%%%%%%%%%%%%%%%%%%%%%%%%%%%%%%%%%%%%%%%%%%%%%%%%%%%%%%%%%%%%%%%%%%%%%%%%%%%%%%%%%%%%%
\section{Discussion}

There are at least three major hurdles which hinder the study of medication nonadherence. First, clinical trials which force patients to miss doses of {\blue the medication being tested could be unethical \cite{millum2013}}. Second, nonadherence is by nature erratic, as patients do not miss doses in precise patterns. Third, there are many parameters (adherence rates, absorption and elimination rates, dosing intervals, etc.)\  to vary in any systematic investigation, and it is difficult to disentangle the individual contributions of each of these parameters. For all of these reasons, probabilistic modeling and analysis is an important tool for studying and mitigating nonadherence.

In this paper, we formulated and analyzed a mathematical model of the drugs levels in an imperfectly adherent patient. The model assumes that the patient misses doses randomly, and we used stochastic analysis to determine pharmacologically relevant statistics of the drug levels in the patient. This analysis revealed several principles for designing drug regimens to mitigate nonadherence and provided tools to predict the efficacy of different regimens when challenged by nonadherence. Of particular note, we showed the resiliency of XR drugs to nonadherence compared to IR drugs when they are dosed at the same frequency. We further showed the benefit of taking a double dose following a missed dose if the absorption or elimination rate is slow compared to the dosing interval.

XR drugs are often recommended in order to improve adherence because they permit less frequent dosing (i.e.\ increase the adherence $p$ by increasing the dosing interval $\tau$). Recently, it has been argued that using XR drugs for this particular purpose does not justify the additional monetary cost of XR versus IR formulations \cite{sumarsono2020}. Indeed, empirical studies of the increase in $p$ associated with increasing $\tau$ have found mixed results \cite{udelson2009, spencer2011, ackloo2008}. However, we have demonstrated an alternative benefit of XR formulations described above (namely, that they better mitigate nonadherence when dosed at the same frequency as IR formulations). It would be interesting to carry out an economic cost/benefit analysis of prescribing XR formulations at the same frequency of IR formulations. 

{\blue While there are ethical issues regarding clinical trials which force patients to miss doses of medication \cite{millum2013}, some of our results could be tested in clinical trials and inform clinical trial design. For example, the efficacy of IR versus XR drug formulations dosed at the same frequency could be compared in a clinical trial to test our predictions, since some trial participants invariably miss or delay doses of their own accord (i.e.\ without compulsion). Indeed, while adherence is generally thought to be higher in clinical trials compared to clinical practice \cite{lowy2011}, poor adherence is nevertheless a significant problem in clinical trials that corrupts estimates of the benefits and risks of a medicine \cite{breckenridge2017}. In addition, our predicted benefit of taking a double dose following a missed dose for drugs with slow absorption or elimination kinetics could also be tested in a clinical trial. For example, one group of participants could be instructed to skip any missed dose and another group could be instructed to take a double dose following a missed dose. Such a study could be especially useful if it included electronic monitoring of adherence \cite{burnier2019}.}

Several prior works have used probabilistic models to study nonadherence. In seminal work, Li and Nekka \cite{li2007} formulated stochastic pharmacokinetic models in which the successive times between doses are independent and identically distributed random variables. Other authors have developed stochastic pharmacokinetic models that allow for variability in dosing times, dose sizes, and eliminations rates for the case of intravenous administration \cite{levy2013} and oral administration \cite{fermin2017}. The discrete time model in \cite{fermin2017} is very similar to our model in the case of the single dose protocol. These prior investigations did not consider different ways of handling missed doses. Assuming that the patient never misses two or more doses consecutively, Ma \cite{ma2017} studied how different ways of handling missed doses affect the time it takes the drug concentration in the patient to enter a specified therapeutic window. The model in the present work generalizes our previous model in \cite{counterman2021} which took the drug absorption rate to be infinite. 

Many groups have used numerical simulations of computational models to study medication nonadherence \cite{boissel2002, garnett2003, reed2004, dutta2006, ding2012, chen2013, gidal2014, brittain2015, stauffer2017, sunkaraneni2018, hard2018, elkomy2020, gu2020}. Our results are in general agreement with some of the conclusions of this prior computational work which investigated some specific drugs, including (i) the efficacy of a double dose following a missed dose and (ii) that adherence thresholds should depend on specifics of the drug regimen.

To elaborate on (ii), the adherence rate $p=0.8$ has long been deemed the threshold separating ``adherent'' and ``nonadherent'' patients \cite{burnier2019, haynes1976}. However, the actual drug level statistics in the patient depend on many parameters other than $p$. We now illustrate this concretely in our model for the IR and XR formulations of Quetiapine fumarate. Suppose patient \#1 and patient \#2 have respective adherence rates
\begin{align*}
p_{1}
=0.9
>p_{2}
=0.7,
\end{align*}
so that patient \#1 would be considered ``adherent'' and patient \#2 would be considered ``nonadherent.'' However, suppose patient \#1 is on the IR formulation and skips missed doses (i.e.\ the single dose protocol) and patient \#2 is on the XR formulation and takes a double dose following a missed dose (i.e.\ the double dose protocol), both with prescribed twice-daily dosing (i.e.\ $\tau=12\,\text{hr}$). Then, if $\C$ is the average drug concentration in a perfectly adherent patient (see \eqref{cp}), then the average drug concentration in patient \#1 is
\begin{align*}
p_{1}\C=0.9\C
\end{align*}
which is actually less than the average drug concentration in patient \#2, which is
\begin{align*}
(p_{2}+p_{2}(1-p_{2}))\C
=0.91\C.
\end{align*}
Furthermore, using the values of $\ke$, $\ka^{\IR}$, and $\ka^{\XR}$ for Quetiapine fumarate {\blue obtained in \cite{elkomy2020} (see section~\ref{pharm} above)}, the respective {error}s for patient \#1 and patient \#2 are
\begin{align*}
{{{\eps}}}_{1}
\approx0.40
>{{{\eps}}}_{2}
\approx0.31.
\end{align*}
Hence, by following the double dose protocol with an XR formulation rather than the single dose protocol with an IR formulation, a supposedly ``nonadherent'' patient can actually have more efficacious drug levels than a supposedly ``adherent'' patient.

%%%%%%%%%%%%%%%%%%%%%%%%%%%%%%%%%%%%%%%%%%%%%%%%%%%%%%%%%%%%%%%%%%%%%%%%%%%%%%%%%%%%%%%%%%%%%%%%%%%%%%%%%%%%%%%%%%%%%%%%%%%%%%%%%%%%%%%%%%%%
\section{Appendix}

In this appendix, we prove the theorems of section~\ref{math}.

\begin{proof}[Proof of Theorem~\ref{lt}]
For any dosing protocol $f$ and any integers $N\ge M$, define
\begin{align}\label{recall1}
A_{M,N}
&:=
\sum_{n=M}^{N}\alpha^{N-n}f(X_{n}),\quad
B_{M,N}
:=
\sum_{n=M}^{N}\beta^{N-n}f(X_{n}).
\end{align}
For $t\in[0,\tau]$, define the random variable
\begin{align}\label{CMN}
C_{M,N}(t)
:=\kappa(\alpha^{t/\tau}A_{M,N}-\beta^{t/\tau}B_{M,N}).
\end{align}
Further, for any $N\in\Z$, define the almost sure limits,
\begin{align}\label{recall2}
\begin{split}
A_{-\infty,N}
&:=\lim_{M\to-\infty}A_{M,N},\quad
B_{-\infty,N}
:=\lim_{M\to-\infty}B_{M,N},\\
C_{-\infty,N}(t)
&:=\lim_{M\to-\infty}C_{M,N}(t),
\end{split}
\end{align}
and notice that $A_{-\infty,0}=A$, $B_{-\infty,0}=B$, and $C_{-\infty,0}(t)=C(t)$ where $A$ and $B$ are defined in \eqref{AB0} and $C(t)$ is defined in \eqref{cdt}. {\blue To see why the almost sure convergence in \eqref{recall2} is guaranteed, note first that the function $f$ must be bounded since its domain $\{0,1\}^{m+1}$ is finite. Hence, the terms in the sums in \eqref{recall1} are bounded by the terms in a geometric series, and so the Weierstrass M-test ensures the almost sure convergence in \eqref{recall2}.}

Notice that the drug concentration in \eqref{CC0} with $f_{n}=f(X_{n})$ can be written as
\begin{align*}
c({t})
&=\kappa\Big(\alpha^{{t}/\tau-N({t})}\sum_{n=0}^{N({t})}\alpha^{N(t)-n}f_{n}-\beta^{{t}/\tau-N({t})}\sum_{n=0}^{N({t})}\beta^{N(t)-n}f_{n}\Big),\quad t\ge0.
\end{align*}
Therefore, 
\begin{align}\label{same}
c(N\tau+t)
=C_{0,N}(t),\quad\text{for any $t\in[0,\tau]$ and integer $N\ge0$}.
\end{align}
Since $\{X_{n}\}_{n\in\Z}$ is stationary, it follows that
\begin{align}\label{shift}
C_{0,N}(t)
=_{\dd}C_{-N,0}(t),\quad\text{for any integer }N\ge 0,
\end{align}
where $=_{\dd}$ denotes equality in distribution. {\blue Now, as in \eqref{recall2}, we have that}
\begin{align}\label{Y0}
\begin{split}
\lim_{N\to\infty}C_{-N,0}(t)
=C(t)
=\kappa(\alpha^{t/\tau}A-\beta^{t/\tau}B),\quad\text{for }t\in[0,\tau].
\end{split}
\end{align}
Equations \eqref{same}, \eqref{shift}, and \eqref{Y0} yield \eqref{cdt}. We note that random variables akin to \eqref{Y0} are sometimes called random pullback attractors because they take an initial condition and pull it back to the infinite past \cite{Crauel01, Mattingly99, Schmalfuss96, lawley15sima, lawley2019hhg}. 

Since $f$ is bounded, $C_{0,N}(t)$ can be bounded by a deterministic constant independent of $N$, and thus \eqref{shift}, \eqref{Y0}, and the bounded convergence theorem yield 
\begin{align}\label{momentconvergence}
\E[(C_{0,N}(t))^{j}]
\to\E[({C(t)})^{j}],\quad\text{as }N\to\infty\text{ for all }j>0.
\end{align}
Combining \eqref{momentconvergence} with \eqref{same} and \eqref{shift} yields the second equality in \eqref{mct}. Combining the second equality in \eqref{mct} with the bounded convergence theorem yields the second equality in \eqref{mcti1}.

Finally, the same argument that gave the second equality in \eqref{mct} gives the second equality in \eqref{mcti2} upon noticing that \eqref{same}, \eqref{shift}, and \eqref{Y0} all still hold when integrated from $t=0$ to $t=\tau$.

To obtain the first equalities in \eqref{mct}-\eqref{mcti2} for the time averages, we first note that Theorem~7.1.3 in \cite{durrett2019} ensures that $\{C_{-\infty,n}(t)\}_{n\in\Z}$ is ergodic for any fixed $t\ge0$ since $\{X_{n}\}_{n\in\Z}$ is ergodic and stationary. We thus have that
\begin{align*}
\lim_{N\to\infty}\frac{1}{N}\sum_{n=0}^{N-1}(C_{-\infty,n}(t))^{j}
=\E[(C_{-\infty,0}(t))^{j}]
=\E[(C(t))^{j}]\quad\text{with probability one},
\end{align*}
by Birkhoff's ergodic theorem (see, for example, Theorem~7.2.1 in \cite{durrett2019}). By \eqref{same}, we have that for any $t\in[0,\tau]$, 
\begin{align*}
\frac{1}{N}\sum_{n=0}^{N-1}(c(n\tau+t))^{j}
=\frac{1}{N}\sum_{n=0}^{N-1}(C_{0,n}(t))^{j}.
\end{align*}
Hence, in order to prove the first equality in \eqref{mct}, it remains to prove
\begin{align}\label{wts21}
\lim_{N\to\infty}\frac{1}{N}\sum_{n=0}^{N-1}(C_{-\infty,n}(t))^{j}
=\lim_{N\to\infty}\frac{1}{N}\sum_{n=0}^{N-1}(C_{0,n}(t))^{j}\quad\text{with probability one}.
\end{align}
To prove this, we first note the bound,
\begin{align*}
C_{-\infty,n}(t)-C_{0,n}(t)
=\kappa\sum_{i=-\infty}^{-1}(\alpha^{t/\tau+n-i}-\beta^{t/\tau+n-i})f(X_{i})
&\le\frac{2\kappa f_{+}(\max\{\alpha,\beta\})^{n+1}}{1-\max\{\alpha,\beta\}}\\
&=:K\gamma^{n},
\end{align*}
where $f_{+}:=\sup_{x\in\{0,1\}^{m+1}}f(x)<\infty$ since $\{0,1\}^{m+1}$ is finite.

If $j\ge1$ is an integer, then the binomial theorem implies the general identity for $a,b\in\R$,
\begin{align*}
(a+b)^{j}=a^{j}+b\sum_{k=1}^{j}{j\choose k}b^{k-1}a^{j-k}.
\end{align*}
Therefore, 
\begin{align}\label{sca}
(C_{-\infty,n}(t))^{j}
=(C_{0,n}(t)+C_{-\infty,n}(t)-C_{0,n}(t))^{j}
\le(C_{0,n}(t))^{j}+K_{0}\gamma^{n},
\end{align}
for a suitably chosen deterministic constant $K_{0}$ independent of $n$. Since $f$ is nonnegative, it follows that
\begin{align}\label{sub}
(C_{-\infty,n}(t))^{j}
-K_{0}\gamma^{n}
\le(C_{0,n}(t))^{j}
\le(C_{-\infty,n}(t))^{j},
\end{align}
which then yields \eqref{wts21} for the case that $j\ge1$ is an integer.

If $j>0$ is not an integer, then notice that the function $g(a)=a^{j-\floor*{j}}$ for $a>0$ is concave and therefore subadditive, and thus
\begin{align}\label{sca2}
\begin{split}
(C_{-\infty,n}(t))^{j-\floor*{j}}
&\le(C_{0,n}(t))^{j-\floor*{j}}
+(C_{-\infty,n}(t)-C_{0,n}(t))^{j-\floor*{j}}\\
&\le(C_{0,n}(t))^{j-\floor*{j}}
+(K\gamma^{n})^{j-\floor*{j}}.
\end{split}
\end{align}
Therefore, \eqref{sca} and \eqref{sca2} imply
\begin{align*}
(C_{-\infty,n}(t))^{j}
&\le\big[(C_{0,n}(t))^{j-\floor*{j}}
+(K\gamma^{n})^{j-\floor*{j}}\big]\big[(C_{0,n}(t))^{{\floor*{j}}}+K_{0}\gamma^{n}\big]\\
&\le(C_{0,n}(t))^{j}+K_{1}\gamma_{1}^{n},
\end{align*}
for suitably chosen deterministic constants $K_{1}>0$ and $\gamma_{1}\in(0,1)$ which are independent of $n$. Hence, \eqref{sub} holds with $K_{0}$ replaced by $K_{1}$ and $\gamma$ replaced by $\gamma_{1}$ and \eqref{wts21} follows.

Having proven the first equality in \eqref{mct}, the first equality in \eqref{mcti1} then follows from the bounded convergence theorem upon noting that
\begin{align*}
\lim_{T\to\infty}\frac{1}{T}\int_{0}^{T}(c(t))^{j}\,\dd t
&=\lim_{N\to\infty}\frac{1}{N\tau}\int_{0}^{N\tau}(c(t))^{j}\,\dd t\\
&=\lim_{N\to\infty}\frac{1}{N}\sum_{n=0}^{N-1}\frac{1}{\tau}\int_{n\tau}^{(n+1)\tau}(c(t))^{j}\,\dd t.
\end{align*}
Finally, the first equality in \eqref{mcti2} follows from applying the same argument that gave the first equality in \eqref{mct} to the integrals $\int_{0}^{\tau}C_{-\infty,n}(t)\,\dd t$ and $\int_{0}^{\tau}C_{0,n}(t)\,\dd t$.
\end{proof}

%%%%%%%%%%%%%%%%%%%%%%%%%%%%%%%%%%%%%%%%%%%%%%%%%%%%%%%%%%%%%%%%%%%%%%%%%%%%%%%%%%%%%%%%%%%%%%%%%%%%%%%%%%%%%%%%%%%%%%%%%%%%%%%%%%%%%%%%%%%%

\begin{proof}[Proof of Theorem~\ref{thm12}]
Equations~\eqref{ob1} and \eqref{ob2} follow immediately from the definition of $C(t)$ in \eqref{Y0}. It thus remains to prove \eqref{nob1} and \eqref{nob2}, {\blue which generalizes} the proof of Theorem~1 in \cite{counterman2021}. Recalling the definitions in \eqref{recall1} and \eqref{recall2}, notice that
\begin{align}\label{inv}
{\A}
=_{\dd}A_{-\infty,1}
=\alpha {\A}+f(X_{1}),\quad
{B}
=_{\dd}B_{-\infty,1}
=\beta {B}+f(X_{1}),
\end{align}
where $=_{\dd}$ denotes equality in distribution.  Taking the expectation of \eqref{inv} and rearranging yields \eqref{nob1}.

To obtain $\E[A^{2}]$, we first square \eqref{inv}, take expectation, and rearrange to obtain
\begin{align}\label{tree0}
\E[A^{2}]
=\frac{1}{1-\alpha^{2}}\Big(2\alpha\E\big[{\A}f(X_{1})\big]+\E\big[(f(X_{1}))^{2}\big]\Big).
\end{align}
By definition of expectation, we have that $\E\big[(f(X_{1}))^{2}\big]
=\sum_{x}(f(x))^{2}\pi(x)$. 
To compute $\E[{\A}f(X_{1})]$, let $1_{E}\in\{0,1\}$ denote the indicator function on an event $E$, which means $1_{E}=1$ if $E$ occurs and $1_{E}=0$ otherwise. Thus,
\begin{align}\label{tree1}
\E[{\A}f(X_{1})]
=\sum_{x}f(x)\E[{\A}1_{X_{1}=x}].
\end{align}
Multiplying \eqref{inv} by $1_{X_{1}=x}$, taking expectation, and using that $({\A},X_{0})$ is equal in distribution to $(A_{-\infty,1},X_{1})$ yields
\begin{align}\label{c0}
\E[{\A}1_{X_{0}=x}]
=\E[A_{-\infty,1}1_{X_{1}=x}]
=\alpha\E[{\A}1_{X_{1}=x}]
+f(x)\pi(x),\quad x\in\{0,1\}^{m+1}.
\end{align}
The conditional expectation tower property (Theorem 5.1.6 in \cite{durrett2019}) implies
\begin{align}\label{c1}
\E[{\A}1_{X_{1}=x}]
=\sum_{y}\E[{\A}1_{X_{1}=x}1_{X_{0}=y}]
=\sum_{y}\E[{\A}1_{X_{0}=y}]P(y,x),
\end{align}
where $P$ is in \eqref{P}. Combining \eqref{c0} and \eqref{c1} yields the following system of linear algebraic equations for $\E[{\A}1_{X_{0}=x}]$,
\begin{align}\label{c2}
\E[{\A}1_{X_{0}=x}]
=\alpha\sum_{y}\E[{\A}1_{X_{0}=y}]P(y,x)+f(x)\pi(x),\quad x\in\{0,1\}^{m+1}.
\end{align}
If we define the vectors ${{u}}_{\alpha}\in\R^{2^{m+1}}$ and $v\in\R^{2^{m+1}}$ by
\begin{align*}
{{u}}_{\alpha}(x)
&:=\E[{\A}1_{X_{0}=x}],\quad
v(x)
:=f(x)\pi(x),
\end{align*}
then \eqref{alg2} solves \eqref{c2}. We note that the Perron-Frobenius theorem guarantees the invertibility of $I-\alpha P^{\top}$ since $I-\alpha P^{\top}=\alpha(\alpha^{-1}I-P^{\top})$ and $\alpha\in(0,1)$. Therefore, \eqref{c0} implies
\begin{align}\label{jg}
\E[Af(X_{1})]
=\sum_{x}f(x)\E[A1_{X_{1}=x}]
=\sum_{x}f(x)\frac{1}{\alpha}(u_{\alpha}(x)-f(x)\pi(x)).
\end{align}
Combining \eqref{jg} with \eqref{tree0} yields the formula for $\E[A^{2}]$ given by \eqref{nob2} upon replacing $\beta$ by $\alpha$. The analogous argument yields $\E[B^{2}]$ (given by \eqref{nob2} upon replacing $\alpha$ by $\beta$). To obtain $\E[AB]$, we first observe that
\begin{align}\label{jg2}
AB
=_{\dd}A_{-\infty,1}B_{-\infty,1}
=(\alpha A+f(X_{1}))(\beta B+f(X_{1})).
\end{align}
Taking the expectation of \eqref{jg2} and rearranging yields
\begin{align*}
\E[AB]
=\frac{1}{1-\alpha\beta}\Big(\alpha\E\big[{\A}f(X_{1})\big]+\beta\E\big[{B}f(X_{1})\big]+\E\big[(f(X_{1}))^{2}\big]\Big).
\end{align*}
Using \eqref{jg} and the analogous equation for $\E\big[{B}f(X_{1})\big]$ yields the formula for $\E[AB]$ in \eqref{nob2} and completes the proof.
\end{proof}

%%%%%%%%%%%%%%%%%%%%%%%%%%%%%%%%%%%%%%%%%%%%%%%%%%%%%

%%%%%%%%%%%%%%%%%%%%%%%%%%%%%%%%%%%%%%%%%%%%%%%%%%%%%%%%%%%%%%%%%%%%%%
\begin{proof}[Proof of Theorem~\ref{largest}]
{\blue
Since missing doses can only decrease the concentration for the single dose protocol, we have the following pair of inequalities,
\begin{align}
\sup_{t\in[0,\tau],N\ge0,\xi}c^{\single}(N\tau+t)
&\le\sup_{t\in[0,\tau],N\ge0}c^{\perf}(N\tau+t),\label{ineq00}\\
\sup_{t\in[0,\tau],\xi}C^{\single}(t)
&\le\sup_{t\in[0,\tau]}C^{\perf}(t).\label{ineq01}
\end{align}
But, setting $\xi_{n}=1$ for all $n$ yields $c^{\single}(N\tau+t)=c^{\perf}(N\tau+t)$ and $C^{\single}(t)=C^{\perf}(t)$, and thus the inequalities in \eqref{ineq00} and \eqref{ineq01} can be replaced by equalities. Hence, we have obtained the first and third equalities in \eqref{ob3}. 

The second equality in \eqref{ob3} and the first equality in \eqref{nob3new} follow from the convergence in distribution in \eqref{cdt} in Theorem~\ref{lt}.
To see this, note first that for any dosing protocol, we have by \eqref{same} and \eqref{shift} that
\begin{align}\label{sameshift}
c(N\tau+t)
=C_{0,N}(t)
=_{\dd}C_{-N,0}(t)
\le C(t),\quad\text{for any $t\in[0,\tau]$, $N\ge0$},
\end{align}
where $C_{M,N}(t)$ is defined in \eqref{CMN} and $C(t):=C_{-\infty,0}(t)$ is defined in \eqref{recall2}. The inequality in \eqref{sameshift} holds because $f$ is nonnegative. The $=_{\dd}$ in \eqref{sameshift} denotes equality in distribution, where the probability measure $\P$ on the set of sequences $\xi=\{\xi_{n}\}_{n\in\Z}$ can be any measure as described in section~\ref{general}. In particular, we can take $\P$ to be the probability measure for the case that $\{\xi_{n}\}_{n\in\Z}$ are iid with $\P(\xi_{n}=1)=p\in(0,1)$ as in section~\ref{simple}. The important point is that for this choice of $\P$, a supremum over sequences $\xi$ is the same as an essential supremum over sequences $\xi$ (since for any finite sequence $\{\zeta_{i}\}_{i=1}^{M}\in\{0,1\}^{M}$ and any $n\in\Z$, we have $\P(\xi_{n}=\zeta_{1},\dots,\xi_{n+M}=\zeta_{M})>0$). Therefore, \eqref{sameshift} implies that
\begin{align}\label{ineq02}
\sup_{t\in[0,\tau],N\ge0,\xi}c(N\tau+t)
\le\sup_{t\in[0,\tau],\xi}C(t).
\end{align}
To obtain equality in \eqref{ineq02}, fix $t\in[0,\tau]$ and let $\delta>0$. By definition of supremum,
\begin{align*}
\P\big(C(t)>\sup_{\xi}C(t)-\delta\big)>0,
\end{align*}
where we again take $\P$ to be as in section~\ref{simple} so that a supremum over $\xi$ and an essential supremum over $\xi$ are equivalent. Hence, the convergence in distribution in \eqref{cdt} in Theorem~\ref{lt} implies that we can take $N$ sufficiently large so that
\begin{align*}
\P\big(c(N\tau+t)>\sup_{\xi}C(t)-\delta\big)>0.
\end{align*}
Since $\delta>0$ is arbitrary, we thus have that
\begin{align}\label{ineq99}
\sup_{N\ge0,\xi}c(N\tau+t)
\ge\sup_{\xi}C(t).
\end{align}
Since $t\in[0,\tau]$ is arbitrary, combining \eqref{ineq99} and \eqref{ineq02} implies that the inequality in \eqref{ineq02} can be replaced by equality. Since this holds for any dosing protocol, we have obtained the second equality in \eqref{ob3} and the first equality in \eqref{nob3new}.

The final equality in \eqref{ob3} and the maximizing time $t^{*}$ in \eqref{ob4} follow from a simple calculus exercise. The formula in \eqref{tst} follows from combining \eqref{ob3}-\eqref{ob4} with the definition of $\theta$ in \eqref{theta}.

We now prove the inequality in \eqref{nob3new}. Fix $t\in[0,\tau]$ and recall that $C^{\double}(t)$ is
\begin{align}\label{cdp}
C^{\double}(t)
=\kappa(\alpha^{t/\tau}A^{\double}-\beta^{t/\tau}B^{\double})
=\sum_{n=0}^{\infty}K_{n}(t)g_{n},
\end{align}
where $K_{n}(t)$ is the coefficient,
\begin{align*}
K_{n}(t)
:=\kappa(\alpha^{n+t/\tau}-\beta^{n+t/\tau})\ge0,\quad n\ge0,
\end{align*}
and
\begin{align}\label{gn}
g_{n}:=f^{\double}(X_{-n}),\quad n\ge0.
\end{align}
By definition of the double dose protocol, we have that for all $n\ge0$, 
\begin{align}\label{dddp}
\textup{$g_{n}\in\{0,1,2\}$ and
if $g_{n}=2$, then $g_{n+1}=0$}.
\end{align}

We claim that there is a finite, nonnegative integer $n^{*}\ge0$ such that 
\begin{align}\label{claim0}
\begin{split}
K_{n}(t)
&< K_{n+1}(t)\quad\text{if }0\le n\le n^{*}-1,\\
K_{n}(t)
&\ge K_{n+1}(t)\quad\text{if }n\ge n^{*}.
\end{split}
\end{align}
That is, the sequence $\{K_{n}(t)\}_{n\ge0}$ is strictly increasing in $n$ for $n<n^{*}$ and nonincreasing in $n$ for $n>n^{*}$. To prove this claim, we momentarily treat $n$ as a continuous variable and differentiate $K_{n}(t)$ with respect to $n$,
\begin{align}\label{dk}
\frac{\partial}{\partial n}K_{n}(t)
=\kappa(\alpha^{n+t/\tau}\ln\alpha
-\beta^{n+t/\tau}\ln\beta).
\end{align}
Rearranging \eqref{dk} shows that $\frac{\partial}{\partial n}K_{n}(t)=0$ if and only if
\begin{align*}
n=
n_{0}
:=
\frac{\ln(\ln\beta/\ln\alpha)}{\ln(\alpha/\beta)}-t/\tau\in\R.
\end{align*}
Note further that the second derivative is negative at $n_{0}$,
\begin{align*}
\frac{\partial^{2}}{\partial n^{2}}K_{n}(t)\Big|_{n=n_{0}}
&=\kappa(\alpha^{n+t/\tau}(\ln\alpha)^{2}
-\beta^{n+t/\tau}(\ln\beta)^{2})\Big|_{n=n_{0}}\\
&=\kappa\beta^{n_{0}+t/\tau}(\ln\beta)(\ln\alpha-\ln\beta)<0.
\end{align*}
Hence, $K_{n}(t)\le K_{n_{0}}(t)$ for all $n\in\R$. Therefore, if $n_{0}\le0$, then the claim is satisfied with $n^{*}=0$. If $n_{0}>0$, then the claim is satisfied by either $n^{*}=\floor{n_{0}}\ge0$ or $n^{*}=\ceil{n_{0}}\ge1$, where $\floor{\cdot}$ and $\ceil{\cdot}$ denote the floor and ceiling functions, respectively. To distinguish between the case $n^{*}=\floor{n_{0}}\ge0$ or $n^{*}=\ceil{n_{0}}\ge1$ for $n_{0}>0$, one merely checks if $K_{\floor{n_{0}}}(t)<K_{\ceil{n_{0}}}(t)$ or $K_{\floor{n_{0}}}(t)>K_{\ceil{n_{0}}}(t)$. We note that if $K_{\floor{n_{0}}}(t)=K_{\ceil{n_{0}}}(t)$, then we can simply take $n^{*}=\floor{n_{0}}$. Therefore, we have verified the claim.

We now claim that if $C^{\double}(t)$ is to be maximized, then we must have that
\begin{align}\label{claim2}
g_{n}=1\quad\text{for all }0\le n\le n^{*}-1.
\end{align}
The claim in \eqref{claim2} is vacuously true if $n^{*}=0$, so suppose $n^{*}\ge1$. To prove the claim in \eqref{claim2}, we start with $n=0$. It is immediate that the maximizing value must either be $g_{0}=1$ or $g_{0}=2$, since setting $g_{0}=0$ only makes the first term in \eqref{cdp} smaller compared to if $g_{0}=1$ or $g_{0}=2$, and it does not allow any other term to be larger than if $g_{0}=1$. If $g_{0}=2$, then we must set $g_{1}=0$ by \eqref{dddp}. However, we claim that $C^{\double}(t)$ is certainly larger if $g_{0}=g_{1}=1$ compared to if $g_{0}=2$ and $g_{1}=0$. To see this, note first that the values of $g_{i}$ for $i\ge2$ are unconstrained by either choice. Further, since $n^{*}\ge1$, \eqref{claim0} implies that $K_{0}(t)<K_{1}(t)$ and thus
\begin{align*}
K_{0}(t)\cdot2+K_{1}(t)\cdot0
< K_{0}(t)\cdot1+K_{1}(t)\cdot1.
\end{align*}
Therefore, $C^{\double}(t)$ is larger if $g_{0}=g_{1}=1$ rather than $g_{0}=2$ and $g_{1}=0$. At this point in the argument, it is still not determined if $C^{\double}(t)$ is larger by taking $g_{0}=g_{1}=1$ or $g_{0}=1$ and $g_{1}=2$. Nevertheless, we conclude that $C^{\double}(t)$ is maximized by setting $g_{0}=1$ in this case that $n^{*}\ge1$. Repeating this argument shows that we must take $g_{n}=1$ for all $n<n^{*}$ in order to maximize $C^{\double}(t)$, and thus we have verified the claim in \eqref{claim2}.

We further claim that if $C^{\double}(t)$ is to be maximized, then 
\begin{align}\label{claim9}
\textup{if $g_{n}=0$, then $g_{n-1}=2$},\quad\text{for all }n\ge1.
\end{align}
To see why \eqref{claim9} holds, observe that if $g_{n}=0$ and $g_{n-1}\neq 2$ for some $n\ge1$, then one could change the value of $g_{n}$ to be $g_{n}=1$ without changing the value of $g_{i}$ for any $i\neq n$, and this would make the value of $C^{\double}(t)$ larger. Hence, \eqref{claim9} holds for any sequence $\{g_{n}\}_{n\ge0}$ which maximizes $C^{\double}(t)$.

Now, let $\{g_{n}\}_{n\ge0}$ be any sequence as in \eqref{gn}-\eqref{dddp} that satisfies \eqref{claim2} and \eqref{claim9}. We claim that the corresponding value of $C^{\double}(t)$ in \eqref{cdp} satisfies
\begin{align}\label{claim3}
C^{\double}(t)
\le K_{n'}(t)+\sum_{n=0}^{\infty}K_{n}(t)
=K_{n'}(t)+C^{\perf}(t),
\end{align}
where $n'\ge n^{*}$ is the smallest integer such that $g_{n}=2$,
\begin{align}\label{nprime}
n'
:=\inf\{n\ge0:g_{n}=2\}\ge n^{*},
\end{align}
where we set $K_{n'}(t)=0$ if $n'=\infty$ in the case that $g_{n}=1$ for all $n\ge0$ (note that \eqref{claim3} is trivially satisfied in this case). In words, the claim in \eqref{claim3} means that the concentration for the double dose protocol is always less than the concentration for perfect adherence plus the concentration from a single dose taken $n'$ dosing times in the past. Note that \eqref{claim9} and \eqref{nprime} imply that
\begin{align}\label{onesbefore}
g_{n}=1,\quad\text{for all }0\le n\le n'-1.
\end{align}
Using the definition of $C^{\double}(t)$ in \eqref{cdp}, \eqref{onesbefore}, and the definition of $n'$ in \eqref{nprime}, the claim in \eqref{claim3} is equivalent to
\begin{align}\label{claim3e}
K_{n'}(t)+\sum_{n=0}^{\infty}K_{n}(t)(1-g_{n})
=\sum_{n=n'+1}^{\infty}K_{n}(t)(1-g_{n})
\ge0.
\end{align}
Define the sequence of $\{S_{n}\}_{n\ge n'+1}$ by
\begin{align*}
S_{n}
:=\sum_{i=n'+1}^{n-1}(1-g_{i}),\quad n\ge n'+1,
\end{align*}
where $S_{n'+1}=0$. Note that since $g_{n}\in\{0,1,2\}$, we are assured that
\begin{align}\label{s11}
S_{n}-S_{n-1}
=1-g_{n-1}
\in\{-1,0,1\}.
\end{align}
Further, \eqref{dddp} implies that 
\begin{align}\label{s22}
\textup{if $S_{n}-S_{n-1}=-1$, then $S_{n+1}-S_{n}=1$}.
\end{align}
In addition, since \eqref{nprime} implies that $g_{n'}=2$, \eqref{dddp} implies
\begin{align}\label{s33}
S_{n'+2}=1-g_{n'+1}=1.
\end{align}
In words, \eqref{s11} means that successive terms in the sequence $\{S_{n}\}_{n\ge n'+1}$ can change by at most $\pm1$, and \eqref{s22} means that if successive terms decrease by 1, then the next term increases by 1. Since $S_{n'+1}=0$ and $S_{n'+2}=1$ by \eqref{s33}, we conclude that
\begin{align}\label{spos}
S_{n}\ge0\quad \text{for all }n\ge n'+1.
\end{align}

Using summation by parts, we have that
\begin{align}\label{sbp}
\begin{split}
\sum_{n=n'+1}^{\infty}K_{n}(t)(1-g_{n})
&=\sum_{n=n'+1}^{\infty}K_{n}(t)(S_{n+1}-S_{n})\\
&=\sum_{n=n'+2}^{\infty}S_{n}(K_{n-1}(t)-K_{n}(t))\ge0,
\end{split}
\end{align}
by \eqref{spos} and the fact that $n'\ge n^{*}$ and $K_{n}(t)$ is nonincreasing in $n$ for $n\ge n^{*}$ as in \eqref{claim0}. In \eqref{sbp}, we also used that $\lim_{n\to\infty}K_{n}(t)=0$ and $S_{n'+1}=0$. Hence, we have verified \eqref{claim3e} and thus \eqref{claim3}.

Therefore, \eqref{claim3} implies that
\begin{align*}
\sup_{t\in[0,\tau],\xi}C^{\double}(t)
&\le \sup_{t\in[0,\tau],n'\ge0}\Big(K_{n'}(t)+\sum_{n=0}^{\infty}K_{n}(t)\Big)\\
&= \sup_{t\in[0,\tau],n'\ge0}\Big(K_{n'}(t)+C^{\perf}(t)\Big)\\
&\le \sup_{t\in[0,\tau],n'\ge0}K_{n'}(t)+\sup_{t\in[0,\tau]}C^{\perf}(t).
\end{align*}
We obtained $\sup_{t\in[0,\tau]}C^{\perf}(t)$ in \eqref{ob3}-\eqref{ob4}, and a straightforward calculus exercise yields
\begin{align*}
\sup_{t\in[0,\tau],n'\ge0}K_{n'}(t)
=\sup_{s\ge0}\kappa(\alpha^{s/\tau}-\beta^{s/\tau})
=\kappa(\alpha^{s^{*}/\tau}-\beta^{s^{*}/\tau}),
\end{align*}
where $s^{*}$ is given in \eqref{sstar}. The bound in \eqref{tdt} follows from combining \eqref{nob3new} and \eqref{sstar} with the definition of $\theta$ in \eqref{theta}.
}
\end{proof}

%%%%%%%%%%%%%%%%%%%%%%%%%%%%%%%%%%%%%%%%%%%%%%%%%%%%%%%%%%%%%%%%%%%%%%%%%%%%%%%%%%%%%%%%%%%%%%%%%%%%%%%%%%%%%%%%%%%%%%%%%%%%%%%%%%%%%%%%%%%%%%%%%%%%%%%%%%%%%%%%%%%%%%%%%%%%%%%%%%%%%%%%%%%%%%%%%%%%%%%%%%%

%%%%%%%%%%%%%%%%%%%%%%%%%%%%%%%%%%%%%%%%%%%%%%%%%%%%%%%%%%%%%%%%%%%%%%%%%%%%%%%%%%%%%%%%%%%%%%%%%%%%%%%%%%%%%%%%%%%%%%%%%%%%%%%%%%%%%%%%%%%%%%%%%%%%%%%%%%%%%%%%%%%%%%%%%%%%%%%%%%%%%%%%%%%%%%%%%%%%%%%%%%%%%%%%%%%%%%%%%%%%%%%%%%%%%%%%%%%
\subsubsection*{Acknowledgments}
SDL was supported by the National Science Foundation (Grant Nos.\ DMS-1944574 and DMS-1814832). 

%%%%%%%%%%%%%%%%%%%%%%%%%%%%%%%%%%%%%%%%%%%%%%%%%%%%%%%%%%%%%%%%%%%%%%%%%%%%%%%%%%%%%%%%%%

%%%%%%%%%%%%%%%%%%%%%%%%%%%%%%%%%%%%%%%%%%%%%%%%%%%%%%%%%%%%%%%%%%%%%%%%%%%%%%%%%%%%%%%%%%%%%%%%%%%%%%%%%%%%%%%%%%%%%%%%%%%%%%%%%%%%%%%%%%%%%%%%%%%%%%%%%%%%%%%%%%%%%%%%%%%%%%%%%%%%%%%%%%%%%%%%%%%%%%%%%%%%%%%%%%%%%%%%%%%%%%%%%%%%%%%%%%%%%%%%%%%%%%%%%%%%%%%%%%%%%%%%%%%%%%%%%%%%%%%%%%%%%%%%%%%%%%%%%%%%%%%%%%%%%%%%%%%%%%%%%%%%%%%%%%%%%%%%%%%%%%%%%%%%%%%%%%%%%%%%%%%%%%%%%%%%%%%%%%%%%%%%%%%%%%%%%%%%%%%%%%%%%%%%%%%%%%%%%%%%%%%%%%%%%%%%%%%%%%%%%%%%%%%%%%%%%%%%%%%%%%%%%%%%%%%%%%%%%%%%%%%%%%%%%%%%%%%%%%%%%%%%%%%%%%%%%%%%%%%%%%%%%%%%%%%%%%%%%%%%%%%%%%%%%%%%%%%%%%%%%%%%%%%%

% Create the reference section using BibTeX:
\bibliography{library.bib}
\bibliographystyle{unsrt}

%%%%%%%%%%%%%%%%%%%%%%%%%%%%%%%%%%%%%%%%%%%%%%%%%%%%%%%%%%%%%%%%%%%%%%%%%%%%%%%%%%%%%%%%%%%%%%%%%%%%%%%%%%%%%%%%%%%%%%%%%%%%%%%%%%%%%%%%%%%%%%%%%%%%%%%%%%%%%%%%%%%%%%%%%%%%%%%%%%%%%%%%%%%%%%%%%%%%%%%%%%%%%%%%%%%%%%%%%%%%%%%%
\end{document}